\DeclareMathOperator{\polyy}{\text{poly}}
\begin{document}
	
	\guidelinedefn%

	\title*{A Fourier analytic approach to Gaussian mixture learning
		}
	\author{Somnath Chakraborty$^{(a)}$, Hariharan Narayanan$^{(b)}$}
	\institute{School of Technology and Computer Science,\\
	Tata Institute of Fundamental Research,\\
	Mumbai 400005, India\\
	ORCIDs: $(a)$ 0000-0002-9854-0181, $(b)$  0000-0002-4331-8794\\
	Email: $(a)$ somnath.chakraborty@alumni.iu.edu, $(b)$ hariharan.narayanan@tifr.res.in
}

	\maketitle
	
	\abstract{ Suppose that we are given independent, identically distributed random samples $x_1,\cdots,x_n$ from a mixture at most $k$ many $d$-dimensional spherical Gaussian distributions $\mu_1,\cdots,\mu_{k_0}$ of identical and known variance $\sigma^2$ in each coordinate, such that the minimum $\ell^2$ distance between two distinct centers $y_l$ and $y_j$ is greater than $2\Delta\sigma \min\{\sqrt{d},\sqrt k\}$, where $\Delta>C_0$, and $C_0$ is a sufficiently large universal constant.  We develop a randomized algorithm that learns the centers $y_l$'s of the Gaussian components to within an $\ell^2$ distance of $k^{-\tilde C_0}$ --- in presence of arbitrarily large number of components and in arbitrary dimension, when the weights are known to be uniform. Furthermore, if the number of components is $k= \Omega(2^d)$, then for arbitrary universal constant $c>0$, even for unknown weights, the algorithm learns the centers to within an $\ell^2$ distance of $d^{-\tilde C_0}$ and the weights up to an accuracy of $cw_{min}$, with probability greater than $1 - \exp(-k/c)$, provided that the weights lie in $[c/k,1/ck]$, and the minimum separation is just $2c\sqrt d$. The number of samples and the computational time is bounded above by $\polyy(k, d)$ in either case. Such a bound on the sample and computational complexity was previously unknown in the regime of non-constant dimension, and in particular, when $d$ is not $O(1)$, 
When $d = O(1)$, this complexity bound follows from \cite{MR3734220}, where it has also been shown that the sample complexity of learning a random mixture of Gaussians in a ball of radius $o(\sqrt{d})$ in $d$ dimensions, when $d$ is $\Theta( \log k)$, is at least super-polynomial in $k, d$, showing that our result is tight in this case.
	}
	\keywords{mixtures of Gaussians, Lipschitz function, Hausdorff distance, sample complexity bound, approximate log-concave functions, PCA, Fourier transform, Johnson-Lindenstrauss lemma, (de)convolution, Hoeffding's inequality}

	\def\thesection{\arabic{section}}
	\section{Introduction}
Designing efficient algorithms that estimate the 
parameters of an underlying probability distribution is a central theme in statistical learning theory.
An important special instance of this learning problem is the case when the underlying distribution is known 
to be a finite mixture of Gaussian distributions in $d$-dimensional 
Euclidean space. Such mixtures are popular models for 
high-dimensional data clustering, and learning mixture of Gaussians 
in an unsupervised setting has been a topic of intensive research for the 
last few decades.

In its most general form, the underlying problem is as follows: we have access 
to random samples drawn independently from some Gaussian 
mixture $\mu:=\omega_1\mu_1+\cdots+\omega_{k_0}\mu_{k_0}$, 
where $(\omega_1,\cdots,\omega_{k_0})$ is a probability vector with strictly positive components, 
and each $\mu_l$ is a Gaussian density in $\mathbb R^d$, with mean 
$y_l\in\mathbb R^d$ and covariance $\Sigma_l$. The algorithmic task is to 
estimate each component of the parameter set $\{(\omega_1,y_1,\Sigma_1),\cdots,
(\omega_{k_0},y_{k_0},\Sigma_{k_0})\}$ of the density function 
$\mu$, within a pre-specified accuracy $\epsilon>0$, with success probability at 
least $1-\delta$ for a pre-specified $\delta>0$. For the purposes of 
this paper, we will restrict ourselves to the case where all the 
Gaussian components are spherical, with identical variance in each coordinates.

Recently, \cite{MR3826226} and \cite{MR3826314} 
devised polynomial time learning algorithms that work
with minimum separation of $k^\epsilon$ (for any $\epsilon>0$), although, the sucess 
of these algorithms appear to be guaranteed only in some restricted 
region in the $(d,k)$-space (see \cref{table:1}). 
In \cite{MR3734220}, Regev-Vijayaraghavan considered the question of obtaining a  lower bound on the
separation of centers necessary for the polynomial learnability of Gaussian mixtures. They 
devised an \emph{iterative algorithm for amplifying the accuracy of parameter estimation} that, 
given initializer $y_1^0,\cdots,y_{k_0}^0$ and a desired accuracy parameter $\epsilon>0$, uses 
polynomially many samples from the underlying mixture and computation steps to return $y_1',\cdots,
y_{k_0}'$, that lie within Hausdorff distance (see \cref{definition:haus}) at most $\epsilon$ from the 
true centers; for more details, see Theorem 4.1 of 
\cite{MR3734220}. One of their results establishes that, in constant 
dimension $d = O(1)$, with minimum separation at least $\Omega(1)$, any uniform mixture 
of spherical Gaussians can be learned to desired accuracy $\delta$ (with high 
probability) in number of samples and computation time depending polynomially on the 
number of components and $\epsilon^{-1}$.

In the present paper we answer the question of learning the centers of uniform mixture of spherical Gaussians, 
up to accuracy $k^{-\tilde C_0}$ for arbitrary large constant $\tilde C_0>0$ --- where the minimum 
separation of the centers is $2\Delta\sigma\min\{\sqrt d,\sqrt k\}$ --- 
in number of samples and computational time that depends polynomially 
on the ambient dimension $d$, and the number of components $k$, provided that 
$\Delta$ is larger than a certain absolute constant $C_0$. Moreover, when $d=O(\log k)$, 
our algorithm can recover the centers within accuracy $d^{-\tilde C_0}$, even if 
the minimum separation of the centers is $2c\sigma\sqrt d$ for any absolute constant $c\in (0,1)$. 
We note that our minimum separation requirement is independent of $k$, the number of components, 
and for $d=O(\log k)$, this is the weakest 
minimum separation required for polynomial learnability of the mixture: see theorem 3.1 of \cite{MR3734220}. 
We employ deconvolution in the Fourier domain --- a heavily-studied 
technique in the realm of statistical data analysis; see, for example, \cite{MR1790011}, 
or the references therein. However, 
a carefully chosen cutoff has to be used in course of performing this deconvolution, because deconvolution by a Gaussian is an unstable operation which --- when applied to functions that are not in the trajectory of a heat flow --- does not give rise to functions. Here is the main theorem of this paper: \\


{\bf {Main Theorem}}: \emph{Let $\tilde C_0,c>0$ be arbitrary absolute constants --- with 
$c\in (0,1)$, and let $\Delta>C_0$ for a sufficiently large absolute constant $C_0>0$. 
Given a mixture of at most $k$ standard spherical Gaussians in 
$\mathbb R^d$ --- having identical known variance $\sigma^2$ in each direction --- 
for which $a)$ all the mixing weights are in $[ck^{-1}, (ck)^{-1}]$, 
and $b)$ the minimum separation of the centers of the 
components satisfies} \[\min_{1\leq l\neq j\leq k_0}
||y_l-y_j||_2\geq 2\Delta\sigma\min\{\sqrt d,\sqrt k\}\,.\] \emph{There 
is an efficient randomized algorithm that accomplishes the following task: 
\begin{enumerate}
\item if the weights are known to be uniform, then in arbitrary dimension $d$, the algorithm recovers (with 
high probability) the centers of the mixture, up to accuracy 
$k^{-\tilde C_0}$ in time (and samples) $\mbox{poly}\left(k,d\right)$;
\item if the weights are unknown, but the number of components satisfies $k\geq 2^d$, 
then the algorithm recovers (with high probability) the centers and the weights 
up to accuracy $d^{-\tilde C_0}$ and $cw_{min}$, respectively, 
in time (and samples) $\polyy\left(k,d\right)$, even if the minimum 
separation is $2c\sigma\sqrt d$.\qed
\end{enumerate}}

\vspace{0.25cm}

Note that, by scale invariance nature of the problem, we may assume $\sigma=1$ 
without loss of generality. We will denote the actual number 
of Gaussian components in the mixture by $k_0$, so that $k_0\leq k$. 
It is evident that the results in this paper generalize 
the main upper bound in theorem 5.1 of \cite{MR3734220} 
to the regime of dimension $k=\Omega(2^d)$ and arbitrary 
dimension ($d$), when the (identical) variance 
and uniform weights of the underlying mixture is known, as was also the underlying 
assumption in \cite{MR3734220}. 

\subsection{Earlier works on Gaussian mixture learning}
Many of the earlier approaches to the problem of Gaussian mixture learning were based 
on local search heuristics, $e.g.$ the EM algorithm and $k$-means 
heuristics, that resulted in weak performance guarantees. 
There is a very extensive body of work on this question, and it would 
take a large survey to cover all of it. We make a modest attempt at 
surveying some of the results in this area that are most in line with our results; 
for more elaborate history on some interesting lines of works, 
see (for example) \cite{MR2115036}, \cite{MR3385380}, 
and the references therein.

In \cite{MR1917603}, Dasgupta presented {the first provably correct algorithm} for learning 
a mixture of Gaussians, with a common unknown covariance matrix, 
using only polynomially many (in dimension as well as 
number of components) samples and computation 
time, under the assumption that the minimum separation between the 
centers of component Gaussians is at least $\Omega(\polyy\log(kd)\sqrt d)$. In a consequent work, 
\cite{MR2320668} showed a variation of EM to work with minimum 
separation only $\polyy\log(kd)d^{\frac 14}$, when the 
components are all spherical. Subsequently, many more algorithms 
(see \cite{MR3024779} and the references therein) with improved sample complexity 
and/or computational complexity have since been devised, that work on 
somewhat weaker separation assumption; in particular, the SVD-based algorithm of 
\cite{MR2059647} learns a mixture of spherical Gaussians with poly-sized 
samples and polynomially many steps, under the separation 
assumption \begin{displaymath}\min_{1\leq l\neq j\leq k}||y_l-y_j||_2\geq
C\max\{\sigma_l,\sigma_j\}\left(\min(k, d)^{\frac 14}\log^{\frac 14}\left(dk/\epsilon\right)+
\log^{\frac 12}\left(dk/\epsilon\right)\right)\,.\end{displaymath} 
Here $\epsilon>0$ denotes the desired $\ell^2$-accuracy to which the centers are learned. 
We note that when $k=\Theta(2^{2^d})$ and $\epsilon=k^{-C_{200}}$ for some 
constant $C_{200}>0$, the above separation requirement 
translates to minimum separation much larger than $d^{\frac 12}$ --- this is due to 
the presence of $\log^{\frac 12}(dk/\epsilon)$, which dominates the rest in 
the regime $k=\Omega(2^{2^d})$; whereas, when $k=\Theta(2^d)$, 
the separation requirement is of order $\sqrt d$. 
On the other hand, consideration of Gaussian 
concentration phenomenon in higher dimension intuitively suggests 
that minimum separation of order $\sqrt d$ should help in clustering the samples 
into respective components and therefore, should help in tackling the problem 
in arbitrary dimension.

In another line of work (see \cite{MR2743304} and \cite{MR3024779}, for example) 
the question of polynomial learnability of arbitrarily separated mixture of Gaussians 
(and more general families of distributions have been investigated by 
\cite{MR3366914}), it has been established that, for Gaussian mixture with 
a fixed number of components and the components having a known minimum 
statistical separation ($i.e.$, a minimum total variation distance of some constant 
$\eta>0$), there is an algorithm that runs in polynomial time and 
uses polynomially many samples to learn the parameters of the mixture to any desired 
accuracy, with arbitrarily high probability.

The table below contains an explicit comparison of 
the guarantees derived (when the covariance matrices are all diagonal 
and identical) --- and the assumptions under 
which they are proved to be true --- in this paper to the ones in some of 
the most recent works that we are aware of.

\begin{table}[ht]
\caption{table}{Comparisons with previous works}\label{table:1}
\centering
\begin{tabular}{c c c}
\hline\hline
References & separation, regime & complexity gurantees \\ [0.5ex]
\hline
\cite{MR3734220}& $C\sqrt d$, $d=O(1)$ & $\polyy\left(k,d\right)$\\
\cite{MR2059647}& $C\log^{\frac 12}\left(dk/\epsilon\right)$, all $d,k$ & $\polyy\left(k,d,\epsilon^{-1}\right)$\\
\cite{MR3826226}& $C\sqrt{\log(k/\epsilon)}$, all $d,k$ & $\polyy\left(d,k, \frac 1{\epsilon}\right)$\\
\cite{MR3826314}& $k^\epsilon$, $k=poly(d)$ & $\max\{(dk)^{\frac 1{\epsilon^2}}, k^{O(1)}d^{O\left(\frac 1\epsilon\right)}\}$\\
This paper & $C_0\min\{\sqrt d,\sqrt k\}$, all $d, k$ & $\polyy(k,d)$\\
This paper & \hspace{0.5cm} $c\sqrt d$, $c>0$ arbitrary, $k=\Omega(2^d)$ & $\polyy(k,d)$\\
\end{tabular}
\label{table:1}
\end{table}

\subsection{Our contributions} Our main contributions can be summarized as follows:
\begin{enumerate}
\item We establish that, when the component 
Gaussians are spherical --- with identical known covariance --- the minimum separation needed to learn the 
centers in polynomial time, is of order $\min \{\sqrt d,\sqrt k\}$. 
\begin{enumerate}
\item In particular, when 
the number of component Gaussians in the mixture overwhelms the ambient 
dimension ($k=\Omega(2^d)$), the minimum separation needed to learn the centers, within 
inverse polynomial (in $k$) accuracy, is $O(\sqrt d)$, which  
improves upon all the previously known separation requirements with respect to the constant overhead, 
as indicated in \cref{table:1}.
\item Moreover, in the setting of $d=O(\log k)$, this 
is the best possible, as established by (theorem III.2 of) \cite{MR3734220}.
\end{enumerate}
\item We have made a novel application of Fourier analysis --- to the problem of 
learning mixture of Gaussians; to the best of our knowledge, 
this has not been rigorously considered before. An earlier attempt on deconvolution via 
Fourier analysis appeared in \cite{MR1790011}, which addressed a related but different 
problem, based on different techniques. The success of our techniques relies 
on the truncation in Fourier domain (see \cref{subsection:001}, especially 
\cref{eq:supp0} and \cref{lem:3}), which is the crux of the methodological contributions 
in this paper.
\end{enumerate}

\subsection{Organization of the paper}
In \cref{overview}, we give a brief overview of the key 
technical ideas in the paper. Next, \cref{prepro} contains preprocessing 
and reduction steps, whereby we reduce the problem to one where the centers 
of the mixtures are all contained in a ball of radius $\Omega(k\sqrt {d\ln (C_1dn)})$, 
in a $d=O(\log k)$ dimensional space, for some $n=\polyy(d,k)$. 
Then, \cref{fourierdeconvolve} contains bulk of the technical results needed 
to carry out the Fourier deconvolution, which results in obtaining 
black-box oracle access to good additive approximation of the underlying 
atomic mixture. Next, \cref{www} contains 
weight estimation. The final \cref{conclusion} 
briefly mentions a problem that still remains open (to the best of our knowledge).\\

\section{Outline of techniques used}\label{overview}
We observe that if two clusters of centers are very far ($i.e.$, the minimum distance 
of a center in one cluster is at least $ \sqrt{d}$ away from every center of the 
other cluster), then the samples are unambiguously of one cluster only. This is 
shown in \cref{lem:3.1}, allowing us to reduce the question to the case 
when a certain proximity graph --- defined on the centers --- is connected. 
This lemma, along with theorem C.1 of \cite{MR3734220}, also allows us 
to reduce the problem to dimension $d\leq k$.

In \cref{alg:learnmixture}, we use the standard Johnson-Lindenstrauss lemma and 
project the data in the ambient space to at most $d=O(\log k)$ many carefully chosen subspaces of 
dimension at most $O(1 + \min(\log k, d))$, and show that if we can separate 
the Gaussians in these subspaces, the resulting centers can be used to obtain 
a good estimate of the centers in the original question. Thus the question is 
further reduced to one where the dimension $d = O(\log k)$.

Next, in \cref{lem:4.2}, we show that if the number of samples is chosen 
to be an appropriately large polynomial in $k$, then all the 
$k_0$ centers are with high probability contained in a 
union of balls $B_i$ of radius $2\sqrt{d}$ centered around the $n$ data points. 
This allows us to focus our attention to $n$ balls of radius $2\sqrt{d}$.

The main idea is that in low dimensions, it is possible to efficiently implement 
a deconvolution on a mixture of identical standard Gaussians having variance 
$1$ in each direction, and recover a good approximation to a mixture of Gaussians 
with the same centers and weights, but in which each component 
now has standard deviation $\bar \Delta < c\Delta$, where $2\Delta \sqrt{d}$ 
is the minimum separation between two centers. Once this density is available 
within small $L^\infty$ error, the local maxima can be approximately obtained 
using a robust randomized zeroth order concave optimization method developed 
in \cite{Belloni:2015uc} started from all elements of a sufficiently rich $\ell^\infty$ 
net on $\bigcup_j B_j$ (which has polynomial size by \cref{lem:4.2.8}), 
and the resulting centers are good approximations (i.e.\, within $k^{-{\tilde C_0}/2}$ 
of the true centers in Hausdorff distance) of the projections of true centers with high probability.
We then feed these $d^{-\frac{5}{2}}$ approximate projected centers into the iterative 
procedure developed in \cite{MR3734220} and by Theorem 
4.1 in there, a seed of this quality, suffices to produce in 
$\mathrm{poly}(k, d, \delta^{-1})$ time, a set of centers whose projections are within $k^{-\tilde C_0}$ of 
the true projections, in Hausdorff distance.

The deconvolution is achieved by convolving the empirical measure $\mu_e$ obtained 
from independent random samples from the mixture, with the 
Fourier transform of a certain carefully chosen $\hat \zeta$. The function $\hat 
\zeta$ is, upto scalar multiplication, the reciprocal of a  Gaussian with standard 
deviation $\sqrt{1 - \bar{\Delta}^2}$ multiplied by the indicator of a ball of radius $(\sqrt{\log k}
+ \sqrt{d})\bar\Delta^{-1}$. It follows from \cref{lem:3}, that the effect 
of this truncation ($i.e.$, multiplication by the indicator) on the deconvolution process can be controlled. 
The pointwise evaluation of the convolution is done using the Monte Carlo method.
The truncation plays an important role; without it, the reciprocal 
of a Gaussian would not have well-defined Fourier transform.\\ 

\section{Preprocessing \& Reduction}\label{prepro}
Suppose we are given independent, identically distributed samples 
$x_1,\dots,x_n$ from a mixture $\mu$ of no more than $k$ of $d$-dimensional spherical 
Gaussian distributions $\mu_j$ with variance $1$, such that the minimum 
$\ell^2$ distance between two distinct centers $y_l$ and $y_j$ is greater 
than $\Delta\min\{\sqrt{d},\sqrt k\}$ for some $\Delta\geq C_0$, where $C_0>0$ is a 
sufficiently large universal constant. We write \begin{displaymath}
\mu(x):=(2\pi)^{- \frac{d}{2}} \sum_{l=1}^{k_0}
w_l\exp\left(-\frac 12{||x-y_l||^2}\right)\,,\end{displaymath} 
where 
$k_0\leq k$, and $(w_1,\cdots,w_{k_0})$ is an unknown probability vector such that 
$w_{min}:=\min_{l\in[k_0]}w_l$ satisfies $w_{min}\geq \frac{c}{k}$, and also 
$Y:= \{y_1, \dots, y_{k_0}\}$ is the set of 
centers of the component Gaussians in the mixture 
$\mu$. 
Let us fix $1 - \eta := \frac{9}{10}$, to be the success 
probability we will require. This can be made $1 - \eta_0$ 
that is arbitrarily close to $1$ by the following  simple clustering 
technique in the metric space associated with Hausdorff distance, 
applied to the outputs of $100(\log \eta_0^{-1})$ runs of the algorithm.

\begin{algorithm}

find the median of all the number of centers output by the 
different runs of \cref{alg:learnmixture}, and set that to be $k_0$\;

pick a set of centers $Y$ (that is the output of one of the runs) 
having the property that $|Y| = k_0$ and at least half of the 
runs output a set of centers that is within a Hausdorff distance of 
less than $\frac{\Delta\sqrt{d}}{k^{C}}$ to $Y$. It is readily seen 
that --- provided each run succeeds with probability $(1 - \eta)$ --- this 
clustering technique succeeds in producing an acceptable set of 
centers $\{\hat{\boldsymbol{y}}_1,\cdots,\hat{\boldsymbol{y}}_{k_0}\}$ with probability at least $1 - \eta_0$\;

once the centers are fixed, to determine the weights, we take the median 
of the weights assigned to the nearest  center over all  $100(\log \eta_0^{-1})$ runs 
of the algorithm, where --- in determining the nearest center --- ties are broken arbitrarily\;

{\bf {return}} {$\{\hat{\boldsymbol{y}}_1,\cdots,\hat{\boldsymbol{y}}_k\}$.}\
\caption{Boost}\label{alg:boost}
\end{algorithm}





We record the observation that, {\em if $d>k$, then 
by Principal Component Analysis (PCA), it is possible to find a linear 
subspace $S_k\subseteq \mathbb R^d$ of dimension $k$, such that all the centers $y_l$ are within 
$\frac{\sqrt{d}\Delta}{k^C}$ of $S_k$ in $\ell^2$-norm, with probability at least 
$1-\frac{\eta}{100}$, using $\polyy(d, k)$ samples and computational 
time} (see \textit{Appendix C} of \cite{MR3734220}); moreover, this does not impose 
any constraint on the minimum separation requirement. PCA has been used previously 
in this context (see \cite{MR2059647}).

Now, let  $\{(x_1(y_{l_1}),y_{l_1}), \dots, (x_n(y_{l_n}),y_{l_n})\}$ be a set of $n$ independent 
identically distributed random samples from the mixture $\mu$, generated by first 
sampling --- with probability $w_l$ --- the mixture component having mean $y_l$, and 
then picking $x_l(y_l)$ from the corresponding Gaussian. With probability 
$1$, all the $x_l$'s are distinct, and this is part of the hypothesis in the lemma below, 
a proof of which appears in \cref{p3.1}.

\begin{lemma}\label{lem:3.1}
Let ${\mathcal G}$ be a graph whose vertex set is $X = \{x_1, \dots, 
x_n\}$, in which two vertices $x_l$ and $x_j$ are connected 
by an edge if the $\ell^2$ distance between $x_l$ and $x_j$ is 
less than $2\sqrt{3d\ln (C_1dn)}$. Decompose ${\mathcal G}$ into the 
connected components ${\mathcal G}_1, \dots, {\mathcal G}_r$ of ${\mathcal G}$. Then, the 
probability that there exist $l \neq j$ and $x \in {\mathcal G}_l$, $x' \in 
{\mathcal G}_j$ such that $x,x'$ are both from the same Gaussian components is less than $\eta/100$.\qed
\end{lemma}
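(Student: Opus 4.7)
The plan is to show the much stronger statement: with high probability, every pair of samples $x_l, x_j$ with $y(x_l) = y(x_j)$ is \emph{directly} joined by an edge of $\mathcal{G}$, i.e.\ $\|x_l - x_j\|_2 < 2\sqrt{3d \ln(C_1 d n)}$. This immediately forces any two samples sharing a center to lie in the same connected component, which is exactly the contrapositive of the event whose probability we wish to bound.

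I would begin by conditioning on the (random) label assignment $(y(x_1),\dots,y(x_n))$; once the labels are fixed, the $x_l$ are independent and any pair with matching label $y(x_l)=y(x_j)=y$ is distributed as $x_l - x_j \sim \mathcal{N}(0, 2I_d)$. Hence $\tfrac{1}{2}\|x_l - x_j\|_2^2 \sim \chi^2_d$, and applying the standard Laurent--Massart tail bound
\[
\Pr\bigl[\chi^2_d \geq d + 2\sqrt{dt} + 2t\bigr] \leq e^{-t},
\]
with $t = 3\ln(C_1 d n)$, gives
\[
\Pr\bigl[\|x_l - x_j\|_2^2 \geq 2d + 4\sqrt{3d\ln(C_1 d n)} + 12\ln(C_1 d n)\bigr] \leq (C_1 d n)^{-3}.
\]
A routine arithmetic check (using $d\geq 1$ and $\ln(C_1 d n)\geq 1$) shows that the right-hand side inside the probability is at most $12 d \ln(C_1 d n) = (2\sqrt{3d\ln(C_1 d n)})^2$, so the same-label pair fails to be adjacent in $\mathcal{G}$ with probability at most $(C_1 d n)^{-3}$.

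Finally, I would union-bound over the at most $\binom{n}{2}\leq n^2$ pairs of indices with possibly matching labels; the total failure probability is bounded by $n^2 (C_1 d n)^{-3} = (C_1^3 d^3 n)^{-1}$, which is at most $\eta/100$ once $C_1$ is chosen as a sufficiently large universal constant (permissible since $C_1$ is allowed to be any constant depending only on earlier constants). Marginalising over the conditioning on labels preserves the bound, completing the proof. The step requiring the most care is just verifying the arithmetic inequality $2d + 4\sqrt{3d\ln(C_1 d n)} + 12\ln(C_1 d n) \leq 12 d \ln(C_1 d n)$, but this is a one-line manipulation rather than a genuine obstacle; everything else is standard Gaussian concentration followed by a union bound.
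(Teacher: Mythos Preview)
Your approach is essentially the paper's: Gaussian concentration plus a union bound, showing that any two samples sharing a centre are adjacent in $\mathcal{G}$. The paper organises it slightly differently: rather than bounding $\|x_l - x_j\|$ for same-centre pairs, it bounds $\|x_l - y(x_l)\|$ for each individual sample via the $\chi^2_d$ tail, obtaining
\[
\Pr\bigl[\|x_l - y(x_l)\| \geq \sqrt{3d\ln(C_1 dn)}\bigr] \leq (3e\ln(C_1 dn))^{d/2}(C_1 dn)^{-3d/2},
\]
then union-bounds over only $n$ events and concludes adjacency by the triangle inequality. This avoids the extra factor of $2$ in the variance of $x_l - x_j$ and needs only an $n$-fold (rather than $n^2$-fold) union bound.

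That factor of $2$ is exactly where your arithmetic slips: the claimed inequality
\[
2d + 4\sqrt{3d\ln(C_1 dn)} + 12\ln(C_1 dn) \;\leq\; 12d\ln(C_1 dn)
\]
is false when $d = 1$, since the term $12\ln(C_1 dn)$ on the left already equals the entire right-hand side, leaving no room for the other two positive terms. The fix is trivial---for instance, take $t = 2d\ln(C_1 dn)$ rather than $t = 3\ln(C_1 dn)$ in Laurent--Massart; then the threshold becomes $2d(1 + 2\sqrt{2L} + 4L) \leq 12dL$ once $L := \ln(C_1 dn) \geq 3$, and the tail $(C_1 dn)^{-2d}$ still survives the $n^2$-way union bound for all $d \geq 1$ with $C_1$ large enough. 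But the paper's sample-to-centre formulation sidesteps the issue altogether.
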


Thanks to this lemma, we can now concentrate on any one particular connected component 
of the graph $\mathcal G$; in other words, we may assume that our mixture is 
well-separated, and at the same time, none of the centers is a so-called outlier which 
translates to the case where all the centers are (after application of a deterministic linear shift, 
if required) within an origin-centric ball of radius $2k\sqrt{3d\ln (C_1dn)}$.

The algorithm below helps us monitor the effect of dimension reduction, and the patching-up 
of the learnt projected centers. In what follows, we set $\epsilon$ to be  $k^{-2{\tilde C_0}}$. 
We note that this algorithm is called-in only when $\omega(\log k)\leq d$; 
thus, we are tacitly assuming here that the weights are uniform.

\begin{algorithm}

{\bf{Input:}} IID samples $x_1,\cdots,x_N$ from the mixture of Gaussians.

{\bf{Output:}} Candidate centers $\hat{\boldsymbol{y}}_1,\cdots,\hat{\boldsymbol{y}}_k$ 

\begin{itemize}
\item let $e_1, \dots, e_{d}$ be a uniformly random orthonormal set of vectors\;

\item define $\bar d= \min(d, O(\log k))$ dimensional subspace $A_{\bar d}$ to be the span of $e_1, \dots, e_{\bar d}$; by \cref{lem:j-l}, with probability greater than $ 1- \frac{\eta}{100}$, the distance between the projections of any two centers is at least $\left(\frac{\Delta}{2}\right)\sqrt{\bar d}$\;

\item use the low dimensional Gaussian learning primitive \cref{alg:findspikes} from \cref{ssec:4.1} on the samples $\{\Pi_{\bar d} x_j\}$ to solv the $\bar d + O(1)$ dimensional problem with high probability, if the distance between any two centers is at least $\left(\frac{\Delta}{2}\right)\sqrt{\bar d}$; let $(y_1^{(\bar d)}, \dots, y_{\bar d}^{(\bar d)})$ be the output of \cref{alg:findspikes}\;

\begin{itemize}
\item {\bf {if}} {this fails to produce such a set of centers,}{
go back to 1\;
}{

\item {\bf{else}} pass the output of \cref{alg:boost} obtained by processing $count_{max}$ copies of $M$, to the iterative algorithm of \cite{MR3734220}, which will correctly output the centers to accuracy $\epsilon/k$ with the required probability $1 - \exp(-k/c)$\;
}
\end{itemize}

\item for any fixed $l \geq \bar d + 1$, let $A(l)$ denote the span of $e_1, \dots, e_{\bar d}$ augmented with the vector $e_l$;  suppose that we have succeeded in identifying the projections of the centers on to $A(l)$ for $\bar d + 1 \leq l \leq d$ to  within $ {k}^{-C_3}$ in $\ell^2$ distance with high probability\;

\item together with the knowledge of $y_1^{(\bar d)}, \dots, y_{\bar d}^{(\bar d)}$, and the initial guarantee that (the projected) centers have large mutual distance, this allows us to patch up these projections and give us the centers $\hat{\boldsymbol{y}}_1,\cdots,\hat{\boldsymbol{y}}_k$ to within a Hausdorff distance of ${\delta}$ with high probability.\
\end{itemize}
\caption{LearnMixture}\label{alg:learnmixture}
\end{algorithm}

By the preceding algorithm, it is clear that it suffices to consider the case when $d \leq C_{1.5}\left(\log k \right)$ for an appropriate constant $C_{1.5}>0$ (inherited from application of \cref{lem:j-l} in \cref{alg:learnmixture}).\\

\section{Analysis in the regime $d \leq C_{1.5} \left(\log k \right)$}\label{fourierdeconvolve}
The following lemma help us restrict our search for the centers inside a 
union of ``not-too-many" balls of radius $\sqrt {d}$. Proof appears 
in \cref{p4.2}

\begin{lemma}\label{lem:4.2}
The following statement holds with probability at least 
$1 - \frac{\eta}{100}$: if \begin{displaymath}n\geq 1+\frac kc\log\left\lceil
\frac{300k}{\eta}\right\rceil\left\lceil\log\left\lceil\frac{300k
\log\left\lceil\frac{300k}{\eta}\right\rceil}\eta\right\rceil\right\rceil\,,\end{displaymath} 
and $x_1,\cdots,x_n$ are independent random $\mu$-samples, 
then \begin{displaymath}
\{y_1, \dots, y_k\} \subseteq \bigcup_{l \in [n]} B_2(x_l, 2\sqrt{d})\,.\end{displaymath}\qed
\end{lemma}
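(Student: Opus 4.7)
\emph{Proof plan.} The plan is to show that for each fixed center $y_j$, at least one of the $n$ samples lands in $B_2(y_j, 2\sqrt{d})$ with high probability, and then to take a union bound over the $k_0 \leq k$ centers. The whole argument is a one-shot ``covering'' calculation: a sample drawn from $\mu$ has a non-negligible chance to come from component $\mu_j$ and, conditional on that, to stay within $2\sqrt{d}$ of its mean; independence across samples then drives the failure probability down exponentially in $n/k$.

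The first step is to lower bound $\Pr_{X \sim \mu}[X \in B_2(y_j, 2\sqrt{d})]$ for each $j \in [k_0]$. Restricting to the event that the draw is from component $\mu_j$ (probability $w_j \geq c/k$), the displacement $X - y_j$ is distributed as $N(0, I_d)$, so $\|X - y_j\|^2$ is a $\chi^2_d$ random variable with mean $d$. Markov's inequality applied to $\|X-y_j\|^2$ gives
\[
\Pr\bigl[\|X - y_j\| \leq 2\sqrt{d} \,\big|\, X \sim \mu_j \bigr] \;\geq\; \frac{3}{4},
\]
and therefore $\Pr_{X \sim \mu}[X \in B_2(y_j, 2\sqrt{d})] \geq \tfrac{3c}{4k}$. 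By independence of the samples, the probability that \emph{none} of $x_1, \dots, x_n$ falls in $B_2(y_j, 2\sqrt{d})$ is at most $(1 - \tfrac{3c}{4k})^n \leq \exp(-3nc/(4k))$.

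The final step is a union bound over the at most $k$ centers: the probability that some $y_j$ is missed by every sample is at most $k \exp(-3nc/(4k))$, and forcing this to be at most $\eta/100$ requires $n \gtrsim (k/c) \log(100k/\eta)$, which is comfortably implied by the outer factor $(k/c)\log \lceil 300k/\eta\rceil$ in the lemma's hypothesis (the second nested logarithm furnishes additional slack). The only real obstacle is the bookkeeping with the ceilings and the somewhat baroque constants in the stated lower bound on $n$; the substance of the argument is just Markov on a chi-squared plus independence across samples plus a union bound over components.
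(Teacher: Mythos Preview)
Your proof is correct, and it is genuinely simpler than the paper's. The paper splits the argument into two phases: first it shows that with $n_0 = (Ck/c)\lceil\log(300Ck/\eta)\rceil$ samples one collects at least $C=\lceil\log(300k/\eta)\rceil$ draws from every component (a coupon-collector step), and then, conditioning on this event, it applies the Laurent--Massart $\chi^2$ tail to each batch of $C$ component-$j$ samples to get $\Pr[\text{all }C\text{ miss }B_2(y_j,2\sqrt d)]\le 2^{-Cd}$, followed by a union bound over $j$. This two-phase structure is what produces the nested ceiling/log expression in the statement.

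You bypass the coupon-collector detour entirely: you lower-bound the per-sample hit probability $\Pr_{X\sim\mu}[X\in B_2(y_j,2\sqrt d)]\ge \tfrac{3c}{4k}$ in one line via Markov on $\|Z\|^2$, and then let independence over $n$ i.i.d.\ draws do all the work. This is more elementary (no Laurent--Massart needed) and in fact yields a strictly smaller sample requirement, of order $(k/c)\log(k/\eta)$ rather than the paper's $(k/c)\log(k/\eta)\cdot\log(\cdots)$; as you note, the stated hypothesis on $n$ then holds with room to spare. The paper's sharper tail bound buys nothing here, because a constant lower bound on the hit probability per component draw is already enough.
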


We recall the following definition:

\begin{definition}\label{definition:haus}
Given two nonempty subsets $S,T\subset \mathbb R^d$, their Hausdorff distance $d_{\mathcal H}
(S,T)$ is $$\max\{\max_{s\in S}\min_{t\in T}||s-t||_2, \max_{t\in T}\min_{s\in S}||s-t||_2\}\,.$$\qed
\end{definition}

Let  ${\mathcal B}(S,T)$ be the set of bijections between $S$ and $T$. It is evident from the definition above that the following holds when $|S|=|T|$: $$d_{\mathcal H}(S,T)=\inf_{\pi\in {\mathcal B}(S,T)}||\pi(s)-s||_2\,.$$

Let ${\mathcal F}$ denote the Fourier transform. For any function $f \in L^1(\mathbb R^d)$, we write $\hat{ f} = {\mathcal F}(f)$, where \begin{eqnarray}\hat{f}(\xi) = \left(\frac{1}{\sqrt{2\pi}}\right)^d \int_{{\mathbb R}^d} f(x) e^{- \imath \xi \cdot x} dx\,.\end{eqnarray} By the Fourier inversion formula, 
\begin{eqnarray} f(x) = \left(\frac{1}{\sqrt{2\pi}}\right)^d \int_{{\mathbb R}^d} \hat f(\xi ) e^{\imath \xi \cdot x} d\xi\,,\end{eqnarray}

The following properties of Fourier transform operator are standard and easily verified, and hence, 
we ommit their proofs altogether.
\begin{lemma}
$(a)$ Let $\gamma(z):= \left(2\pi\right)^{-\frac d2} e^{-\frac{||z||^2}{2}}$ be the standard Gaussian density in $\mathbb R^d$; then, $\hat \gamma: = {\mathcal F}(\gamma) = \gamma$.

$(b)$ If $g,h\in L^2(\mathbb R^d)$ are such that $f\star g\in L^2(\mathbb R^d)$, then ${\mathcal F}(f\star g)=(2\pi)^{\frac d2}\hat f\hat g$. Here $\star$ denotes the convolution.\qed
\end{lemma}

\subsection{Deconvolution}\label{subsection:001}
We make the following conventions for the rest of this paper.\\

{\bf{Convention:}} Symbol $x.y$ represents decimal number. 
All constants $C_{x.y}$ are absolute constants in $[1,\infty)$, and all 
constants $c_{x,y}$ satisfy $c_{x.y}=C_{x.y}^{-1}$. We allow $C_{x.y}$ to depend on 
$C_{x'.y'}$ if $x.y>x'.y'$.\qed\\

Let $\bar \Delta$ equal  $\Delta/C_{3.2}$. Let $\gamma_\sigma(z) := \sigma^{-d} \gamma(\sigma^{-1}z)$ denote the spherical Gaussian whose one dimensional marginals have variance $\sigma^2$, and note that $\hat\gamma_{\bar\Delta}(z)=\gamma(z\bar\Delta)$. Define \begin{align}\label{eq:supp0}{\mathcal B}:=\left(\sqrt{{C_{3.5}\ln k}}+ \sqrt{d}\right)B_2(0, \bar \Delta^{-1})\,,\end{align} where $B_2(0, \bar \Delta^{-1})$ is the Euclidean ball of radius $\bar \Delta^{-1}$; set $\hat{s}(z) := \gamma\left({ z\bar \Delta }\right){\mathbb I}_{\mathcal B}(z),$ where ${\mathbb I}_S$ the indicator function of $S$. In the lemma that follows, we write $s = {\mathcal F}^{-1}(\hat s)$; proof of the lemma appears in \cref{p3}.

\begin{lemma}\label{lem:3} For all $z\in {\mathbb R}^d$, we have 
$|s(z) - \gamma_{\bar \Delta}(z)| \leq (2\pi{\bar\Delta}^2)^{-\frac d2}k^{- \frac{C_{3.5}}2}$.\qed
\end{lemma}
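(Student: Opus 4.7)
The plan is to split $\hat s(w) = \gamma(w\bar\De) - \gamma(w\bar\De)\I(\|w\|>r)$ where $r := (\sqrt{C_{3.5}\ln k}+\sqrt d)\,\bar\De^{-1}$, and treat the truncation as a perturbation of the full untruncated inverse Fourier transform. First I use the Fourier scaling identity for the self-dual standard Gaussian $\gamma$ to compute
$$\F^{-1}(\gamma(\cdot\,\bar\De))(z) \;=\; \bar\De^{-d}\gamma(z/\bar\De) \;=\; \gamma_{\bar\De}(z)$$
exactly. By linearity of $\F^{-1}$ this gives
$$s(z)-\gamma_{\bar\De}(z) \;=\; -\F^{-1}\!\left(\gamma(w\bar\De)\I(\|w\|>r)\right)(z),$$
reducing the task to a uniform-in-$z$ bound on this residual.

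Next, since $\gamma(\cdot\,\bar\De)\I(\|\cdot\|>r)\in L^1(\R^d)$, the inverse Fourier integral is absolutely convergent and $|e^{\imath\xi\cdot z}|=1$, so the triangle inequality yields the pointwise bound
$$|s(z) - \gamma_{\bar\De}(z)| \;\leq\; (2\pi)^{-d/2}\int_{\|w\|>r}\gamma(w\bar\De)\,dw.$$
Changing variables by $u = \bar\De\, w$ rewrites the right-hand side as $(2\pi\bar\De^2)^{-d/2}\cdot \p[\|U\| > \sqrt{C_{3.5}\ln k}+\sqrt d]$ with $U\sim N(0,I_d)$: the two $(2\pi)^{-d/2}$ prefactors (one from Fourier inversion, one from $\gamma$ itself) together with the Jacobian $\bar\De^{-d}$ combine to produce exactly the prefactor $(2\pi\bar\De^2)^{-d/2}$ that appears in the statement.

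Finally, the factor $k^{-C_{3.5}/2}$ comes from Gaussian concentration of the norm: the map $u\mapsto\|u\|$ is $1$-Lipschitz and $\E\|U\|\leq\sqrt d$, so Borell's inequality (equivalently, the chi-squared tail bound (4.3) of Laurent-Massart \cite{Laurent} already invoked in Lemma~\ref{lem:3.1}) gives $\p[\|U\|\geq \sqrt d + t]\leq e^{-t^2/2}$; taking $t = \sqrt{C_{3.5}\ln k}$ produces exactly $k^{-C_{3.5}/2}$. There is no real obstacle here beyond careful bookkeeping of the $(2\pi)^{-d/2}$ prefactors in the symmetric Fourier convention, together with the observation that the truncation radius $r$ was chosen precisely so that the Gaussian tail collapses to $k^{-C_{3.5}/2}$.
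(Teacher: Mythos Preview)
Your proof is correct and follows essentially the same route as the paper: both identify $\gamma_{\bar\De}=\F^{-1}(\gamma(\cdot\,\bar\De))$, write $s-\gamma_{\bar\De}$ as the inverse Fourier transform of the truncated Gaussian tail, bound it pointwise by $(2\pi)^{-d/2}$ times the $L^1$ norm of that tail (the paper phrases this as the $L^1\to L^\infty$ operator norm of $\F^{-1}$), and then control the tail by a $\chi^2$/Gaussian-norm concentration estimate. The only cosmetic difference is that you invoke Borell's Lipschitz concentration inequality while the paper cites the Laurent--Massart $\chi^2$ bound directly; both yield the same $k^{-C_{3.5}/2}$ factor.
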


\subsection{An observation}
Let $x_1, \dots, x_n$ be iid samples from the Gaussian mixture $\mu$. 
Let $\mu_e$ denote the uniform probability measure supported on $\{x_1, \dots, x_n\}$. 
Let $\star$ denote convolution in ${\mathbb R}^d$.  Note that the Fourier convolution 
identity is 
$(2\pi)^{-\frac d2}{\mathcal F}(f\star g)=\hat f\hat g$. 
Let $\hat \zeta := \hat{\gamma}^{-1} 
\cdot \hat{s}$, and $\zeta = {\mathcal F}^{-1}(\hat \zeta)$. We will recover the centers 
and weights of the Gaussians from $\zeta \star \mu_e = (2\pi)^{\frac d2}{\mathcal F}^{-1}(\hat \zeta 
\cdot \hat{\mu}_e)$. The heuristics are as follows.

Let $\nu$ denote the unique probability measure satisfying $\gamma \star 
\nu = \mu$. Thus, 
$\nu = \sum_{j=1}^{k_0} w_j \delta_{y_j}$, 
where $\delta_{y_j}$ is a dirac delta supported on $y_j$. It follows, roughly speaking,
that inside $\mbox{supp}(\hat s)$, we get \begin{eqnarray*}{\hat\nu}(w)&
=&(2\pi)^{-d}{\hat\gamma}(w)^{-1}\mathbb E_{X\approx\mu}\left[e^{-iX\cdot w}\right] 
\ \approx (2\pi)^{-\frac d2}{\hat\gamma}(w)^{-1}\hat\mu_e(w)\\ \Rightarrow\hspace{1cm}
\hat s(w){\hat\nu}(w)&\approx& (2\pi)^{-\frac d2}\hat\zeta(w)\hat\mu_e(w)\end{eqnarray*} 
pointwise, and this should (roughly) yield 
$s\star \nu\approx (2\pi)^{-\frac d2}\zeta\star \mu_e$. 
On the other hand, notice that \cref{lem:3} shows 
that $s\star\nu\approx \gamma_{\bar\Delta}\star\nu$, and because the spikes of $\nu$ are 
approximately (up to scaling) the spikes of $\gamma_{\bar\Delta}\star\nu$, we restrict to learning the spikes 
of $\gamma_{\bar\Delta}\star\nu$ by accessing an approximation 
via $(2\pi)^{-\frac d2}\zeta\star \mu_e$. For notational convenience, 
we write $\xi_e:=(2\pi)^{-\frac d2}\zeta\star \mu_e$.

\subsection{Monte-Carlo and access to deconvolved mixture}

In order to formalize the heuristics above, we proceed as follows. Recall that ${\mathcal B}=
\mbox{supp}(\hat s)$, so that $z\in {\mathcal B}$ if and only if $z\in
\mathbb R^{d}$ satisfies 
$||{\bar\Delta}z||\leq \sqrt{C_{3.5}\ln k}+
\sqrt{d}$.

The following proposition allows us to construct a (random) {\bf black box} oracle 
that outputs a good additive approximation of $\gamma_{\bar \Delta}\star\nu$ at 
any given point $x$. See \cref{p0} for a proof of the proposition.

\begin{proposition}\label{randomsample}
Let $z_1,\cdots,z_m$ be independent, random variables drawn from the 
uniform (normalized Lebesgue) probability measure on ${\mathcal B}$. 
Let $x_1,\cdots,x_n$ be independent $\mu$-distributed random samples. If 
$m=C_{3.6}k^{C_{3.6}}\ln k$, and \begin{eqnarray}\label{sample11}n\geq C_{3.6}k^{C_{3.6}+C_{3.5}}
\ln\left(C_{3.6}k^{C_{3.6}+C_{3.5}}\ln k\right),\end{eqnarray} where $C_{3.6}\geq C_{0.1}
C_{3.5}+C_{1.5}\left(\ln(2\pi)+2\ln\left(\frac{2C_{3.2}}c\right)\right)+C_{3.5}
\left(8+\frac{2C_{3.2}^2}{c^2}\right)$, then for any $x\in\mathbb R^d$, the random variable \begin{eqnarray}\label{blackbox}f_x:=\frac {\mbox{vol}({\mathcal B})}m
\sum_{{l\in[m]}}e^{ix\cdot z_l}\gamma
(z_l{\bar\Delta})e^{\frac{||z_l||^2}2}{\hat\mu}_e(z_l)\end{eqnarray} satisfies the 
inequality 
$\left|(\gamma_{\bar\Delta}\star\nu)(x)- \Re f_x\right|\leq 
3k^{-C_{3.5}}$ 
with probability at least $1-8k^{-C_{3.5}}$: 
here $\Re f_x$ denotes the real part of $f_x$.\qed
\end{proposition}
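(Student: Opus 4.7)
My approach is to decompose the error by introducing the intermediate quantity
\begin{equation*}
\tilde f_x := \frac{\mathrm{vol}(\mathcal{B})}{m}\sum_{l=1}^m e^{ix\cdot z_l}\gamma(z_l\bar\De)e^{\|z_l\|^2/2}\hat\mu(z_l),
\end{equation*}
which replaces $\hat\mu_e$ in the definition of $f_x$ by the true Fourier transform $\hat\mu$ of $\mu$. Writing $g(z) := e^{ix\cdot z}\gamma(z\bar\De)e^{\|z\|^2/2}$, the decomposition is
\begin{equation*}
|(\gamma_{\bar\De}\star\nu)(x) - \Re f_x| \le E_1 + E_2 + E_3,
\end{equation*}
with $E_1 := |(\gamma_{\bar\De}\star\nu)(x) - (s\star\nu)(x)|$, $E_2 := |(s\star\nu)(x) - \Re\tilde f_x|$, and $E_3 := |\Re\tilde f_x - \Re f_x|$. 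The plan is to bound each summand by $k^{-C_{3.5}}$, with failure probability at most $4k^{-C_{3.5}}$ for each of $E_2$ and $E_3$. The first term follows immediately from Lemma~\ref{lem:3} applied to each $s(x-y_l)-\gamma_{\bar\De}(x-y_l)$ and the fact that $\sum_l w_l = 1$; the prefactor $(2\pi\bar\De^2)^{-d/2}$ is polynomial in $k$ since $d \le C_{1.5}\log k$ and $\bar\De$ is a universal constant, and is absorbed by choosing $C_{3.5}$ large.

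For $E_2$, the key computation is the identity $\mathbb E_{z}[\tilde f_x] = (s\star\nu)(x)$, where $z$ is uniform on $\mathcal{B}$. Using $\hat\mu(z) = \gamma(z)\sum_l w_l e^{-iz\cdot y_l}$ and $\gamma(z)e^{\|z\|^2/2} = (2\pi)^{-d/2}$, Fourier inversion applied to $\hat s$ yields
\begin{equation*}
\mathbb E_z[\tilde f_x] = \int_{\mathcal{B}} g(z)\hat\mu(z)\,dz = (2\pi)^{-d/2}\sum_l w_l\int \hat s(z)\,e^{iz\cdot(x-y_l)}\,dz = \sum_l w_l\, s(x-y_l).
\end{equation*}
Each iid summand $\mathrm{vol}(\mathcal{B})\,g(z_l)\hat\mu(z_l)$ has modulus at most $\mathrm{vol}(\mathcal{B})(2\pi)^{-d}$, because $|g(z)\hat\mu(z)| \le (2\pi)^{-d/2}\gamma(z\bar\De) \le (2\pi)^{-d}$, and this bound is polynomial in $k$ in our regime (the volume of $\mathcal{B}$ is $k^{O(1)}$ when $d\le C_{1.5}\log k$ and $\bar\De$ is a universal constant). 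Applying Hoeffding (Lemma~\ref{hoeff1}) to the $m$ iid samples $z_1,\dots,z_m$ then gives $E_2 \le k^{-C_{3.5}}$ with the desired failure probability, provided $m \ge C_{3.6}k^{C_{3.6}}\log k$ for a suitably large $C_{3.6}$.

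For $E_3$, I condition on $z_1,\dots,z_m$ and use $|f_x - \tilde f_x| \le \frac{\mathrm{vol}(\mathcal{B})}{m}\sum_l |g(z_l)|\,|\hat\mu_e(z_l) - \hat\mu(z_l)|$. For each fixed $z_l$, the difference $\hat\mu_e(z_l) - \hat\mu(z_l)$ is the mean of $n$ centered iid complex random variables of modulus at most $(2\pi)^{-d/2}$, so Hoeffding gives $|\hat\mu_e(z_l) - \hat\mu(z_l)| \le t_3$ with failure probability at most $\exp(-c_{0.1}n t_3^2(2\pi)^d)$; a union bound over the $m$ values of $l$ controls the whole sum. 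Since $|g(z)| \le (2\pi)^{-d/2}e^{\|z\|^2(1-\bar\De^2)/2}$ and $\|z\|^2 \le (\sqrt{C_{3.5}\log k}+\sqrt d)^2\bar\De^{-2}$ on $\mathcal{B}$, one has $\sup_{\mathcal{B}}|g| \le k^{O(C_{3.5}/\bar\De^2)}$; choosing $t_3 = k^{-\Theta(C_{3.5}/\bar\De^2)}$ makes the sum at most $k^{-C_{3.5}}$, and the lower bound on $n$ in~(\ref{sample11}) is calibrated precisely to drive the union-bounded failure probability below $4k^{-C_{3.5}}$.

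The main obstacle will be the careful bookkeeping around the exponential factor $e^{\|z\|^2/2}$, the inherent cost of Fourier deconvolution: it appears both in each summand of $\tilde f_x$ (affecting $E_2$) and in $\sup_{\mathcal{B}}|g|$ (affecting $E_3$), and is polynomial in $k$ on the truncation support $\mathcal{B}$. The choice of $C_{3.6}$ in (\ref{sample11}) is exactly what absorbs these polynomial-in-$k$ losses. Taking real parts nowhere worsens any bound since $|\Re a - \Re b| \le |a-b|$, and summing the two failure probabilities across $E_2$ and $E_3$ yields the stated $1 - 8k^{-C_{3.5}}$.
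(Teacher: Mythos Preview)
Your proof is correct and follows essentially the same three-step approach as the paper: a truncation/tail error (your $E_1$, the paper's inequality~(\ref{hoeff0})), a Monte Carlo error over the uniform samples $z_l$ controlled by Hoeffding (your $E_2$, the paper's inequality~(\ref{hoeff6})), and an empirical-versus-true error over the $\mu$-samples $x_j$ controlled by Hoeffding with a union bound over the $m$ frequencies (your $E_3$, the paper's inequality~(\ref{hoeff8})). The only cosmetic difference is your choice of intermediate quantity: you insert $\hat\mu$ in place of $\hat\mu_e$ to form $\tilde f_x$, whereas the paper passes through $I_x=\mathbb E_z[e^{ix\cdot z}\phi(z)]$ with $\phi(z)=\mathrm{vol}(\mathcal B)\hat s(z)\hat\nu(z)$; since $\hat\mu=\gamma\cdot(2\pi)^{d/2}\hat\nu$ and $g(z)\gamma(z)=(2\pi)^{-d/2}\gamma(z\bar\De)$, these two bookkeepings differ only in where the factor $e^{\|z\|^2/2}$ is absorbed, and lead to the same Hoeffding bounds. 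Your handling of the constants (in particular the claim that the $(2\pi\bar\De^2)^{-d/2}$ prefactor from Lemma~\ref{lem:3} is absorbed) is at the same level of precision as the paper's own accounting, which likewise obtains $k^{-C_{3.5}/4}$ for the truncation term and folds it into the stated $3k^{-C_{3.5}}$.
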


\subsection{Low-dimensional learning algorithm}\label{ssec:4.1}
In \cref{alg:findspikes} below --- whose analysis will be 
deferred to \cref{an:random}, at any point $z \in  \bigcup_{i}  B_2(x_i, 2\sqrt{d})$, 
we will assume access to the random variable $f_z$ in $\mathbb C$, such that 
\begin{displaymath}{\mathbb P}\left[|f_z - (\gamma_{\bar\Delta}\star\nu)(z)| 
<  k^{- C_4}\right] > 1 - k^{-C_4}\end{displaymath} 
As established by \cref{randomsample}, these $f_z$ can be 
constructed using polynomially many samples and computation steps, in such a 
way that for any subset $\{z_1, \dots, z_m\} \subseteq \mathbb R^d,$ the 
set $\{f_{z_1}, \dots, f_{z_m}\}$ consists of independent random variables. 
Let $diam(Q_\ell)$ denote the $\ell^2$ diameter of $Q_\ell$.

In our implementation, we will employ \emph{the efficient zeroth order stochastic 
concave maximization algorithm}, devised in \cite{Belloni:2015uc}. We denote 
this algorithm of \cite{Belloni:2015uc} as ${\mathfrak A}_0$. In $d$-dimensional Euclidean space, the algorithm 
returns an $\epsilon$-maxima of certain $d^{-1}\epsilon$-approximate $t$-Lipschitz 
concave function, and the number of function evaluations used depends polynomially 
on $d, \epsilon$, and $\log t$. The performance guarantee of algorithm ${\mathfrak A}_0$ is summarized 
in \cref{fact:1} stated below.

\begin{definition}
Let $\mathcal B\subseteq \mathbb R^{\bar d}$ be a convex set, and $\xi>0$. 
A continuous function $\phi:{\mathcal B}\rightarrow\mathbb R_+$ is said to be $\xi$-approximately 
log-concave if there exists a continuous function $\psi:{\mathcal B}
\rightarrow\mathbb R_+$ such that $\log\psi$ is concave, and 
$||\log\phi-\log\psi||_{L^\infty({\mathcal B})}\leq \xi$. We say $\phi$ is approximately 
log-concave if it is $\xi$-approximately log-concave for some $\xi>0$.
\end{definition}

\begin{lemma}[Belloni et al]\label{BLNR}\label{fact:1}
Suppose that ${\mathcal B}\subseteq \mathbb R^{\bar d}$ is a convex subset, and $\chi,\psi:{\mathcal B}
\rightarrow \mathbb R$ are functions satisfying $\sup_{z\in {\mathcal B}}|\chi(z)-\psi(z)|\leq \frac{\epsilon}{\bar d}$. Suppose 
that $\psi$ is concave and $t$-Lipschitz; then algorithm ${\mathfrak A}_0$ returns a point $q\in {\mathcal B}$ 
satisfying $\chi(q)\geq \max_{z\in{\mathcal B}}\chi(z)-\epsilon$, and uses 
$O({\bar d}^{8}\epsilon^{-2}\log t)$ computation steps.\hfill$\square$
\end{lemma}

We now present the algorithm for learning in 
low-dimension. In what follows, we use $\bar d$ instead of $d$ in order to keep the notation consistent with \cref{alg:learnmixture}.

\begin{algorithm}

{\bf{Input:}} IID samples from well-separated mixture of Gaussians in $d=O(\log k)$ dimensions;

{\bf{Output:}} The centers of the components Gaussians;\\

let $count_{max} = C_{4.5} k$\;

{\bf{while}} {$count \leq count_{max}$,}

\begin{itemize}

\item for each point $$\ell \in \left(\frac{\bar\Delta}{1000}  \sqrt{\frac{\bar d}{C_{1.5} \log k}}\right){\mathbb Z}^{\bar d} \bigcap \bigcup_i  B_2(x_i, 2\sqrt{\bar d}),$$ let $Q_\ell$ be the ball of radius $\left(\frac{\bar\Delta}{400}\frac{\bar d}{\sqrt{C_{1.5} \log k}}\right)$, centered at $\ell$\; 

\item use an efficient zeroth order stochastic concave maximization subroutine (see \cref{fact:1}) that produces a point $q_\ell$ in $Q_\ell$ at which $(2\pi)^{-\frac{\bar d}2}(\zeta  \star \mu_e)(z)$ is within $k^{-C_{4.6}}$ of the maximum of $(2\pi)^{-\frac{\bar d}2}(\zeta \star \mu_e)$ restricted to $Q_\ell$\;

\item create a sequence $L = (q_{\ell_1}, \dots, q_{\ell_{k_1}})$ that consists of  all those $q_\ell$, such that $(2\pi)^{- \frac{\bar d}{2}}(\zeta \star \mu_e) (q_{\ell}) > (\frac{w_{min}}{2}) \gamma_{\bar \Delta}(0)$, and  $\|\ell - q_\ell\|_2 < diam(Q_\ell)/4$\;

\item reorder $L$, if necessary, so that $(\zeta \star \mu_e) (q_{\ell_1})\geq\cdots\geq (\zeta \star \mu_e) (q_{\ell_{k_1}})$\;

\begin{itemize}
\item {\bf{if}} {$k_1 <1$,} {\bf{return}} {\textbf{error}}\;

\item {\bf{else}}, form a subsequence $M = (\ell_{m_1}, \dots, \ell_{m_{k_2}})$ of $(\ell_1, \dots, \ell_{k_1})$ using the following iterative procedure\:
\begin{itemize}

\item let $m_1 \leftarrow 1$, $j\leftarrow 1$ and $M := \{\ell_{1}\}$\;

\item[] {\bf{while}} {$j \leq k_1$,}

\item[]\hspace{1cm} {\bf{if}} {$\ell_{j+1}$ is not contained in $B(\ell_{j'}, \sqrt{\bar d}{\sqrt{\Delta \bar\Delta}}/2)$ for any $j' \leq j$,}
\begin{itemize}
\item[]\hspace{1cm} append $\ell_{j+1}$ to $M$\;

\item[]\hspace{1cm} increment $j$\;
\end{itemize}

\end{itemize}

\end{itemize}
\item pass $$\{(q_{\ell_{m_j}}) \}_{j \in [k_2]}$$ to \cref{alg:learnmixture}\;

\item increment $count.$

\end{itemize}
\caption{FindSpikes}\label{alg:findspikes}
\end{algorithm}

\section{Estimating Weights}\label{www}
We find estimates for the weights of the Gaussian components, 
within accuracy $cw_{min}$, when the dimension satisfies $\bar d\leq O(\log k)$. The idea is that the 
weight of a Gaussian component in a well-separated 
mixture is concentrated near the center of the component, 
hence, can be estimated by considering the integral of the 
mixture in an appropriately sized ball around the 
(approximate) center of the component. We formulate this in the following. A proof 
appears in \cref{p00}.

\begin{proposition}[Estimating the weights]\label{weight}
Let $q_\ell$ be an output of \cref{alg:boost}, and let $\hat Q_\ell\subseteq \mathbb R^{\bar d}$ be the ball 
of radius $C_{3.3}\bar\Delta\sqrt{\bar d}$, centered at $\ell$, with volume 
$v$. Let $z_1,\cdots,z_p$ be independent uniformly random points 
from $\hat Q_{\ell}$; if \begin{eqnarray}p\geq \frac{C_{3.5}}{c_{0.1}\delta^2}
k^{\frac {C_{1.5}}2\ln\left(eC_{3.3}^2\right)}
\ln k\,.\end{eqnarray} then 
\begin{eqnarray}\mathbb P\left[\left|\frac 1p\sum_{j=1}^p\Re (f_{z_j})-\int_{\hat Q_\ell}
\gamma_{\bar\Delta}\star\nu~dx\right|>\delta+3k^{-C_{3.5}}v\right]&\leq&20k^{-C_{3.5}}\end{eqnarray} In 
particular, if $y_{j_0}\in Q_\ell$, then \begin{eqnarray}\mathbb P\left[\left|\frac 1p
\sum_{j=1}^p\Re (f_{z_j})-w_{j_0}\right|>cw_{min}\right]&\leq&20k^{-C_{3.5}}\,.\end{eqnarray}\qed
\end{proposition}

\section{Discussions}
\subsection{Why brute-force-search may not work even when $\polyy(2^d)$ many samples are available?} 
As stated in the introduction, our results are 
most interesting when $d=O(\log k)$. Perhaps an intelligent approach in this regime would be to 
get $\polyy(k)$ independent samples $X_1,\cdots,X_{k^C}$ from the Gaussian mixture $\mu$, 
and consider an $\epsilon$-net on the union $\bigcup_j B(X_j,C'\sqrt{d})$, and perform an exhaustive 
brute-force search for the centers, using this net. Criterion to decide 
if a point is close to a center of such a mixture of Gaussians is that, in a small neighbourhood, 
the point is close to a maxima of the mixture density. How we ascertain whether 
a point is close to a local maxima is another story (and we discuss this 
in this paper), but for now, we consider ourselves just powerful enough to be able to 
do this. Note, however, that when $d=\Omega(\log k)$, such an $\epsilon$-net 
has cardinality of potential order 
$\Omega(k^C(C'/\epsilon)^dd^{\frac d2})=
\Omega(k^{C+\log(C'/\epsilon)}k^{\frac 12\log\log k})=\Omega(k^{\log\log k})$. 
Thus, to learn the centers, we 
need to be ready for $\Omega(k^{C+\log(C'/\epsilon)}k^{\frac 12\log\log k})$ 
many searches, which fails to be executed in $\polyy(k)$ time. Moreover, even if one 
ignores $\log\log k$ in the exponent above, there is the so-called {\em mixture-density dilemma} 
that needs to be dealt with. To discuss this, let us consider a unit sphere $\mathbb S^{d-1}
\subseteq \mathbb R^d$, and let $y_1,\cdots,y_k\in \mathbb S^{d-1}$ be such that 
the atomic-mixture $\nu:=k^{-1}(\delta_{y_1}+\cdots+\delta_{y_k})$ is an approximation 
of the uniform surface measure of the sphere; equivalently, the points $y_1,\cdots,y_k$ 
are equidistributed. Let $\gamma$ be the standard $d$-dimensional spherical Gaussian 
density of unit variance in each coordinate. In this case, the origin is a global maxima of the mixture 
$\nu\star \gamma$. This can be easily seen when the centers of the Gaussians are the roots of 
unity $y_r:=e^{\frac{2\pi ir}{k}}$ for $r\in [k]$ with $k=2n$ and $n$ odd. Let $\gamma_t$ be the 2-dimensional Gaussian 
with mean 0 and variance $t^2$, and let $\mu=\nu\star\gamma_t$. 
One has $\mu(0)=(2\pi t^2)^{-1}e^{-\frac 1{2t^2}}$, and for any $r\in [k]$, 
\begin{eqnarray*}\mu\left(e^{\frac{2\pi ir}k}\right)&=& 
\frac {1}{4n\pi t^2}+\frac 1{4n\pi t^2}\sum_{r=1}^{2n-1}
e^{-\frac{\left|1-e^{\frac{\pi ir}n}\right|^2}{2t^2}}\\ &=&
\frac {1+e^{-\frac 2{t^2}}}{4n\pi t^2}+\frac {e^{-\frac 1{t^2}}}{2n\pi t^2}\sum_{r=1}^{n-1}
e^{\frac{\cos\left({\frac{\pi r}n}\right)}{2t^2}}\\ &\leq &
\frac {1+e^{-\frac 2{t^2}}}{4n\pi t^2}+\frac {(n-1)e^{-\frac 1{t^2}}}{4n\pi t^2}
\left(e^{\frac{\cos\left({\frac{\pi}n}\right)}{2t^2}}+
e^{-\frac{\cos\left({\frac{\pi}n}\right)}{2t^2}}\right)\,.\end{eqnarray*} A simple 
observation is that $1+e^{-\frac 2{t^2}}<2e^{-\frac 1{2t^2}}$ when $t\geq 1$; 
hence, \begin{eqnarray*}&\left(e^{\frac{\cos\left({\frac{\pi}n}\right)}{2t^2}}+
e^{-\frac{\cos\left({\frac{\pi}n}\right)}{2t^2}}\right)&<2e^{\frac{\cos\left({\frac{\pi}n}\right)}{2t^2}}\,,\\ \Rightarrow & \mu\left(e^{\frac{2\pi ir}k}\right)& < 
\frac {2e^{-\frac 1{2t^2}}}{4n\pi t^2}+\frac {2(n-1)e^{\frac{\cos\left({\frac{\pi}n}\right)}{2t^2}}
e^{-\frac 1{t^2}}}{4n\pi t^2}\\ &&< (2\pi t^2)^{-1}e^{-\frac 1{2t^2}}\,.\end{eqnarray*} In particular, the center appears to be a global maxima. This example illustrates the short-comings of naive brute-force-search. 

We deal with this, by deconvolving the Gaussian mixture followed by local search. It appears 
from the simple computation that --- as $t\rightarrow 0^+$, the situation should turn around dramatically, 
and this is precisely what is expected, since Gaussian kernel is an efficient approximate identity. Thus, 
suitable application of deconvolution technique results in a mixture that 
approximatey resembles the underlying atomic mixture after smoothning with a thin Gaussian. 
Such a smooth mixture inherits the property of the underlying atomic mixture, which 
prohibits points away from all the centers to manifest as maxima (local or global) of the mixing density. 
We note that the deconvolution itself involves technicalities, since this amounts to multiplying with 
exponentially increasing function in Fourier domain, and in presence of noise in 
computing the characteristic function using samples from the mixture, one needs 
to be sufficiently careful.

\subsection{About assumptions on uniform variance, and almost `uniform' mixing weights}
Our motivation for considering this somewhat stylized setting --- of 
almost `uniform' mixing weight, and Gaussian components having 
uniform variance --- stems from questions in manifold learning. Here, data 
originate from an underlying compact smooth manifold --- having known positive lower-bound 
on the reach and known upper-bound on Hausdorff volume --- isometrically embedded into 
a possibly exponentially large dimensional ambient space, 
and then get noisy by addition of independent Gaussian noise from the ambient 
space. The original distribution supported on the manifold usually 
posses almost uniform Radon-Nikodym derivative with respect to 
the Hausdorff measure on the manifold. Considering the set of centers 
of the individual Gaussian components as compact (nonsmooth) 0-dimensional manifold, 
the above facts account for uniform variance, and almost `uniform' weight, respectively. 
We note that the usual manifold learning results currently available do not carry forward 
to our 0-dimensional settings, mainly because (1) the 0-dimensionality does not 
allow smooth analysis available for positive dimensional manifolds, and (2) the 0-dimensional 
case affords noise with comparatively large variance than in positive dimension.

\subsection{Why is the focussed dimensionality interesting?}\label{subsection:ela1}

One of our practical technical motivation for considering the regime $d\leq \log k$ is 
that essentially the entire analysis on higher dimension reduces to this regime by 
an application of Johnson-Lindenstrauss. This is prevalent even in the more general setting of 
manifold learning, where the noiseless data comes from and underlying 
space --- which is a manifold of dimension $d$ --- and gets corrupted by Gaussian noise: for example, 
in case of Gaussian mixture learning, the underlying space consists of the means of the Gaussians, a zero dimensional 
manifold. We note that this also partially explains the reason for considering 
mixture of spherical Gaussian with identical variance. 
In the general case, a standard (and reasonably well-motivated --- 
by an urge to avoid {\em ``curse of dimensionality"}) trick is 
dimension reduction. If the manifold has standard $d$-dimensional volume $\mu_{I}$ and 
curvature based volume $\mu_{I}'$, and if 
an Euclidean tubular neighbourhood of radius $\tau>0$ of the manifold remains 
reasonably smooth, then Johnson-Lindenstrauss type theorem (see, for example, theorem 1.5 of \cite{MR1500161}), 
asserts that a random projection onto a subspace of dimension of order \begin{displaymath}\epsilon^{-2}
\left(d\log(1/\epsilon)+\log(1/\delta)\right)+\epsilon^{-2}(\log (\mu_{I}/\tau^d+\mu_{I}'))\end{displaymath} 
is an $\epsilon$-approximate isometry with probability at least $1-\delta$. Thus, one 
essentially projects data onto a randomly choosen subspace of dimension specified above. In this paper too, 
we follow this broad strategy, whereby data from high dimensional space are first projected onto 
a sequence of subspaces and the resulting mean coordinates are carefully patched-up.

Many of the popular algorithms for learning mixture of Gaussians, with 
separation assumption --- that appears to be at par or even weaker than the separation assumption in this paper 
--- are based on EM methodology and/or moment-based 
methods. While these algorithms are highly efficient in learning mixture of Gaussians, 
it is not immediately obvious whether they generalize to the more involved case of 
data analysis, where the underlying space is a manifold of positive dimension. On 
the other hand, our approach is generic in some sense, and we expect this can be extended 
to the more general case of manifold learning.

\subsection{Handling mixtures with non-spherical Gaussians}\label{subsection:100}

The main idea in Fourier analytic approach to learning Gaussian mixture 
hinges upon the fact that for mixture of spherical Gaussians with known variance, 
Fourier transform turns the convolution into a product of (sum of) 
sinusoids and a heat kernel, which we know precisely --- so that we can deconvolve. 
In case of non-spherical Gaussians, following obstructions manifests: 
\begin{itemize}
\item in case of unknown variance, we need to first learn the variance, and we don't 
know if it can be done in the realm of Fourier analysis; 
\item in the non-spherical case, the Fourier transform of the density of the mixture becomes 
a non-uniform weighted sum of sinusoids, and a uniform deconvolution is not possible.
\end{itemize}

Fortunately, their are standard techniques in statistics literature that deals with 
covariance estimation for Gaussian mixture model; we refer to \cite{MR2985942} 
and the references cited therein for more details on these aspects.

\section{Conclusion and open problems}\label{conclusion}

We developed a randomized algorithm that learns the centers $y_i$ 
of standard Gaussians in a mixture, to within an $\ell^2$ distance of $\delta < \frac{\Delta\sqrt{d}}{2}$ 
with high probability when the minimum separation between two centers is at least $\sqrt{d}\Delta$, where 
$\Delta$ is larger than an universal constant in $(0, 1)$. 
The number of samples and the computational time is bounded above by 
$\polyy\left(k, d,\log\left(\frac{1}{\delta}\right)\right)$. Such a polynomial bound on the sample and 
computational complexity was previously unknown when $d \geq  \omega(1)$. 
There is a matching lower bound due to 
\cite{MR3734220} on the sample complexity of learning a random mixture 
of Gaussians in a ball of radius $\Theta(\sqrt{d})$ in $d$ dimensions, 
when $d$ is $\Theta( \log k)$. It remains open whether (as raised in \cite{MR3734220}) 
$\polyy(k, d, 1/\delta)$ upper bounds on computational complexity of this task 
can be obtained when the minimum separation between two centers in 
$\Omega(\sqrt{\log k})$ in general, although when $d \leq O(\log k)$, 
this follows from our results. It would also be interesting to extend the 
results of this paper to mixtures of spherical Gaussians whose variances are not necessarily equal, 
or even more, to the case of non-spherical Gaussian mixture.

\bibliography{template}

@article {MR2985942,
    AUTHOR = {Ledoit, Olivier and Wolf, Michael},
     TITLE = {Nonlinear shrinkage estimation of large-dimensional covariance
              matrices},
   JOURNAL = {Ann. Statist.},
  FJOURNAL = {The Annals of Statistics},
    VOLUME = {40},
      YEAR = {2012},
    NUMBER = {2},
     PAGES = {1024--1060},
      ISSN = {0090-5364,2168-8966},
   MRCLASS = {62H12 (60B20 62J07)},
  MRNUMBER = {},
MRREVIEWER = {},
       DOI = {},
       URL = {},
}

@incollection{MR1917603,
	author = {Dasgupta, Sanjoy},
	booktitle = {40th {A}nnual {S}ymposium on {F}oundations of {C}omputer {S}cience ({N}ew {Y}ork, 1999)},
	date-added = {2022-06-25 22:53:42 +0200},
	date-modified = {2022-06-25 22:58:46 +0200},
	pages = {634--644},
	publisher = {IEEE Computer Soc., Los Alamitos, CA},
	title = {Learning mixtures of {G}aussians},
	year = {1999},
	Bdsk-Url-1 = {https://mathscinet.ams.org/mathscinet-getitem?mr=1917603}}

@article{Belloni:2015uc,
	author = {Belloni, Alexandre and Liang, Tenguyan and Narayanan, Hariharan and Rakhlin, Alexander},
	date-added = {2022-06-25 22:10:35 +0200},
	date-modified = {2022-06-25 23:25:10 +0200},
	journal = {Proceedings of Machine Learning Research},
	title = {Escaping the Local Minima via Simulated Annealing: Optimization of Approximately Convex Functions.},
	url = {http://arxiv.org/abs/1501.07242},
	year = {2015},
	Bdsk-Url-1 = {http://arxiv.org/abs/1501.07242}}

@book{MR3837109,
	author = {Vershynin, Roman},
	date-added = {2022-06-25 22:07:43 +0200},
	date-modified = {2022-06-25 22:59:13 +0200},
	isbn = {978-1-108-41519-4},
	note = {An introduction with applications in data science, With a foreword by Sara van de Geer},
	pages = {xiv+284},
	publisher = {Cambridge University Press, Cambridge},
	series = {Cambridge Series in Statistical and Probabilistic Mathematics},
	title = {High-dimensional probability},
	volume = {47},
	year = {2018},
	Bdsk-Url-1 = {https://mathscinet.ams.org/mathscinet-getitem?mr=3837109}}

@article{MR2059647,
	author = {Vempala, Santosh and Wang, Grant},
	date-added = {2022-06-25 22:07:25 +0200},
	date-modified = {2022-06-25 23:01:07 +0200},
	fjournal = {Journal of Computer and System Sciences},
	journal = {J. Comput. System Sci.},
	number = {4},
	pages = {841--860},
	title = {A spectral algorithm for learning mixture models},
	volume = {68},
	year = {2004},
	Bdsk-Url-1 = {https://mathscinet.ams.org/mathscinet-getitem?mr=2059647}}

@incollection{MR3734220,
	author = {Regev, Oded and Vijayaraghavan, Aravindan},
	booktitle = {58th {A}nnual {IEEE} {S}ymposium on {F}oundations of {C}omputer {S}cience---{FOCS} 2017},
	date-added = {2022-06-25 22:06:50 +0200},
	date-modified = {2022-06-25 22:57:25 +0200},
	pages = {85--96},
	publisher = {IEEE Computer Soc., Los Alamitos, CA},
	title = {On learning mixtures of well-separated {G}aussians},
	year = {2017},
	Bdsk-Url-1 = {https://mathscinet.ams.org/mathscinet-getitem?mr=3734220}}

@incollection{MR3024779,
	author = {Moitra, Ankur and Valiant, Gregory},
	booktitle = {2010 {IEEE} 51st {A}nnual {S}ymposium on {F}oundations of {C}omputer {S}cience---{FOCS} 2010},
	date-added = {2022-06-25 22:06:24 +0200},
	date-modified = {2022-06-25 22:56:07 +0200},
	pages = {93--102},
	publisher = {IEEE Computer Soc., Los Alamitos, CA},
	title = {Settling the polynomial learnability of mixtures of {G}aussians},
	year = {2010},
	Bdsk-Url-1 = {https://mathscinet.ams.org/mathscinet-getitem?mr=3024779}}

@article{MR1850542,
	author = {Karatsuba, Ekatherina A.},
	date-added = {2022-06-25 22:04:56 +0200},
	date-modified = {2022-06-25 22:57:02 +0200},
	fjournal = {Journal of Computational and Applied Mathematics},
	journal = {J. Comput. Appl. Math.},
	number = {2},
	pages = {225--240},
	title = {On the asymptotic representation of the {E}uler gamma function by {R}amanujan},
	volume = {135},
	year = {2001},
	Bdsk-Url-1 = {https://mathscinet.ams.org/mathscinet-getitem?mr=1850542}}

@inproceedings{MR2743304,
	author = {Kalai, Adam Tauman and Moitra, Ankur and Valiant, Gregory},
	booktitle = {S{TOC}'10---{P}roceedings of the 2010 {ACM} {I}nternational {S}ymposium on {T}heory of {C}omputing},
	date-added = {2022-06-25 22:03:18 +0200},
	date-modified = {2022-06-25 23:00:16 +0200},
	pages = {553--562},
	publisher = {ACM, New York},
	title = {Efficiently learning mixtures of two {G}aussians},
	year = {2010},
	Bdsk-Url-1 = {https://mathscinet.ams.org/mathscinet-getitem?mr=2743304}}

@article{MR1805785,
	author = {Laurent, B. and Massart, P.},
	date-added = {2022-06-25 22:02:41 +0200},
	date-modified = {2022-06-25 23:00:41 +0200},
	fjournal = {The Annals of Statistics},
	journal = {Ann. Statist.},
	number = {5},
	pages = {1302--1338},
	title = {Adaptive estimation of a quadratic functional by model selection},
	volume = {28},
	year = {2000},
	Bdsk-Url-1 = {https://mathscinet.ams.org/mathscinet-getitem?mr=1805785}}

@article{MR1790011,
	author = {Koltchinskii, V. I.},
	date-added = {2022-06-25 22:01:44 +0200},
	date-modified = {2022-06-25 23:00:00 +0200},
	fjournal = {The Annals of Statistics},
	journal = {Ann. Statist.},
	number = {2},
	pages = {591--629},
	title = {Empirical geometry of multivariate data: a deconvolution approach},
	volume = {28},
	year = {2000},
	Bdsk-Url-1 = {https://mathscinet.ams.org/mathscinet-getitem?mr=1790011}}

@inproceedings{MR1500161,
	author = {Kenneth Clarkson},
	booktitle = {Proceedings of the 24th {A}nnual {S}ymposium held at the {U}niversity of {M}aryland, {C}ollege {P}ark, {MD}, {J}une 9--11, 2008},
	date-added = {2022-06-25 22:00:19 +0200},
	date-modified = {2022-06-25 22:47:42 +0200},
	pages = {39--48},
	publisher = {Association for Computing Machinery (ACM), New York},
	title = {Tighter bounds for random projections of manifolds},
	year = {2008},
	Bdsk-Url-1 = {https://mathscinet.ams.org/mathscinet-getitem?mr=1500161}}

@inproceedings{MR3385380,
	author = {Hsu, Daniel and Kakade, Sham M.},
	booktitle = {I{TCS}'13---{P}roceedings of the 2013 {ACM} {C}onference on {I}nnovations in {T}heoretical {C}omputer {S}cience},
	date-added = {2022-06-25 21:59:27 +0200},
	date-modified = {2022-06-25 22:58:05 +0200},
	pages = {11--19},
	publisher = {ACM, New York},
	title = {Learning mixtures of spherical {G}aussians: moment methods and spectral decompositions},
	year = {2013},
	Bdsk-Url-1 = {https://mathscinet.ams.org/mathscinet-getitem?mr=3385380}}

@inproceedings{MR3826314,
	author = {Hopkins, Samuel B. and Li, Jerry},
	booktitle = {S{TOC}'18---{P}roceedings of the 50th {A}nnual {ACM} {SIGACT} {S}ymposium on {T}heory of {C}omputing},
	date-added = {2022-06-25 21:58:52 +0200},
	date-modified = {2022-06-25 22:57:42 +0200},
	pages = {1021--1034},
	publisher = {ACM, New York},
	title = {Mixture models, robustness, and sum of squares proofs},
	year = {2018},
	Bdsk-Url-1 = {https://mathscinet.ams.org/mathscinet-getitem?mr=3826314}}

@inproceedings{MR3826226,
	author = {Diakonikolas, Ilias and Kempe, David and Henzinger, Monika},
	booktitle = {STOC'18---Proceedings of the 50th Annual ACM SIGACT Symposium on Theory of Computing},
	date-added = {2022-06-25 21:56:59 +0200},
	date-modified = {2022-06-25 23:24:33 +0200},
	pages = {1047--1060},
	publisher = {Association for Computing Machinery (ACM), New York},
	title = {List-decodable robust mean estimation and learning mixtures of spherical Gaussians.},
	year = {2018},
	Bdsk-Url-1 = {https://mathscinet.ams.org/mathscinet-getitem?mr=3826226}}

@article{MR2320668,
	author = {Dasgupta, Sanjoy and Schulman, Leonard},
	date-added = {2022-06-25 21:55:11 +0200},
	date-modified = {2022-06-25 23:01:26 +0200},
	fjournal = {Journal of Machine Learning Research (JMLR)},
	journal = {J. Mach. Learn. Res.},
	pages = {203--226},
	title = {A probabilistic analysis of {EM} for mixtures of separated, spherical {G}aussians},
	volume = {8},
	year = {2007},
	Bdsk-Url-1 = {https://mathscinet.ams.org/mathscinet-getitem?mr=2320668}}

@article{MR3366914,
	author = {Belkin, Mikhail and Sinha, Kaushik},
	date-added = {2022-06-25 21:53:51 +0200},
	date-modified = {2022-06-25 22:56:34 +0200},
	fjournal = {SIAM Journal on Computing},
	journal = {SIAM J. Comput.},
	number = {4},
	pages = {889--911},
	title = {Polynomial learning of distribution families},
	volume = {44},
	year = {2015},
	Bdsk-Url-1 = {https://mathscinet.ams.org/mathscinet-getitem?mr=3366914}}

@article{MR2115036,
	author = {Arora, Sanjeev and Kannan, Ravi},
	date-added = {2022-06-25 21:52:50 +0200},
	date-modified = {2022-06-25 22:58:31 +0200},
	fjournal = {The Annals of Applied Probability},
	journal = {Ann. Appl. Probab.},
	number = {1A},
	pages = {69--92},
	title = {Learning mixtures of separated nonspherical {G}aussians},
	volume = {15},
	year = {2005},
	Bdsk-Url-1 = {https://mathscinet.ams.org/mathscinet-getitem?mr=2115036}}
\appendix

\section{Technical Discussions}
\subsection{Gaussian concentration}
In all the discussions, we will employ the following tail-bound on $\chi^2$-statistic; 
this has appeared as inequality (4.3) of \cite{MR1805785} which contains a proof of the statement.

\begin{lemma}\label{lem:3.0}
Suppose that $u$ is $\chi^2$-statistic with $d$ degrees of freedom; then, for any $x>0$, 
the following inequality holds: \begin{displaymath}\mathbb P\left[u-d\geq 2\sqrt{dx}+2x\right]
\leq e^{-x}\,.\end{displaymath}\qed
\end{lemma}

\begin{remark}\label{rem0}
The reason that we are able to learn the centers (and the weights too, in suitable cases) 
even with Gaussian smoothing at just a constant scale (recall that we are basically scaling 
$\Delta$ by $C_{3.2}$) is hidden in the Gaussian concentration; the probability the scaled Gaussian 
exceeds $CC_{3.2}^{-1}d$ far from its center is at most $e^{-C^2C_{3.2}d}$.\qed
\end{remark}

\begin{remark}\label{rem1}
We note that \cref{lem:3.1} allows us to restrict onto any one particular cluster of centers, 
and we know this cluster of centers has (with high probability) diameter at most $$\rho:=2k\sqrt{3d\ln (C_1dn)}\,.$$ 
Equivalently, we are, from here on, focussing on a $\rho$-bounded mixture 
in dimension $d\leq k$, {\em a la Regev-Vijayaraghavan in \cite{MR3734220}}. For this, it is enough for us 
to estimate the centroid of the mixture by averaging $\polyy(k,d)$ many samples, 
and apply an affine-transform so that origin is placed at this estimated 
centroid. Next, we take the push forward 
of $\mu$ via an orthoprojection of $\mathbb R^d$ on to $S_k$, and work with 
that instead. This allows us to reduce the dimension $d$ to $k$ (if $d$ 
started out being greater than $k$), while retaining the same $\Delta$ to 
within a multiplicative factor of $2$.\qed
\end{remark}

\subsection{Random projection}
We now recall the Johnson-Lindenstrauss lemma for dimension reduction; this will 
be indispensable for dealing with high-dimensional mixture. A proof of the lemma can 
be found in \cite{MR3837109} (see theorem 5.3.1).

\begin{lemma}[Johnson-Lindenstrauss]\label{lem:j-l}
Let ${\mathcal Y}$ be a set of at most $k$ points in $\mathbb R^d$, and $\epsilon>0$. 
Assume that \begin{displaymath}m\geq \left(\frac{\tilde C}{\epsilon^2}\right)
\log k.\end{displaymath} Let $E\subseteq \mathbb R^d$ 
be a random subspace drawn from the uniform distribution on Grassmannian $G_{d,m}$ --- 
the space of $d$-dimensional vector subspaces of $\mathbb R^m$; if $\Pi_E$ 
is the orthogonal projection onto $E$, then with probability at least $1-\exp(-\tilde c\epsilon^2m)$, the 
scaled projection \begin{displaymath}\Pi:=\sqrt{\frac dm}\Pi_E\end{displaymath} is an 
approximate identity on ${\mathcal Y}$: \begin{displaymath}(1-\epsilon)
||y_j-y_l||_2\leq ||\Pi(y_j-y_l)||_2\leq (1+\epsilon)||y_j-y_l||_2\quad
\forall~y_j,y_l\in {\mathcal Y}\,.\end{displaymath}\qed
\end{lemma}

\subsection{Remarks on \cref{randomsample}}
We needed to employ Hoeffding's 
inequality for $\mathbb C$-valued random variables:

\begin{lemma}[Hoeffding]\label{hoeff1}
Let $b>0$. Let $Y_1,\cdots,Y_r$ be independent identically distributed $\mathbb C$-valued 
random variables such that $|Y_j|\leq b^{-1}$ for all $j\in [r]$. Then \begin{eqnarray}\label{hoeff2}\mathbb P\left[\left|
\frac 1r\sum_{j=1}^r(\mathbb E[Y_j]-Y_j)\right|\geq t\right]\leq 2e^{-c_{0.1}rt^2b^2}\end{eqnarray} where 
$c_{0.1}$ is a universal constant.\qed
\end{lemma}

\begin{remark}
The main point to be noted --- \textit{vis-a-vis} \cref{randomsample} --- is that, for {\em each} $x\in\mathbb R^d$, we need polynomially many mixture samples to determine $f_x$; since we need to determine $f_x$ at polynomially many $x\in\mathbb R^d$, this is enough (albeit, slightly wasteful) for the desired sample complexity and time complexity.\qed
\end{remark}

\begin{remark}
We note that, for any $x\in\mathbb R^d$, one has  
\begin{eqnarray*}
\xi_e(x)=
\mbox{vol}({\mathcal B})\mathbb E_z\left[\hat s(z)\exp\left(||z||^2/2\right)\hat\mu_e(z)e^{ix\cdot z}\right] \,.
\end{eqnarray*}
Writing $\xi(z):=\hat s(z)e^{\frac{||z||^2}2}\hat\mu_e(z)e^{ix\cdot z}$, 
one has \begin{eqnarray*}\label{bb1}\mathbb P\left[\left|f_x-\xi_e(x)\right|>k^{-C_{3.5}}\right] &=&
\mathbb P\left[\frac{\mbox{vol}({\mathcal B})}m\left|
\displaystyle\sum_{{l\in[m]}}\xi(z_l)-\mathbb E
\left[\xi(z_l)\right]\right|>k^{-C_{3.5}}\right]\,.\end{eqnarray*} Using 
Hoeffding's inequality from \cref{hoeff2} and the fact that $m=C_{3.6}k^{C_{3.6}}\ln k$, 
we get \begin{eqnarray}\label{bb2}\mathbb P\left[\left|f_x-\xi_e(x)\right|>k^{-C_{3.5}}\right] &\leq &
k^{-C_{3.5}}\,.\end{eqnarray}\qed 
\end{remark}

\section{Proofs}
\subsection{Proof of \cref{lem:3.1}}\label{p3.1}
\begin{proof}
By bounds on the tail of a $\chi^2$-random variable with parameter $d$ (see 
\cref{lem:3.0}), the probability that $||x_l-y(x_l)||\geq \sqrt{3d\ln(C_1dn)}$ 
can be bounded from above as follows. 
\begin{eqnarray}\mathbb P\left[||x-y(x)||\geq\sqrt{3d\ln(C_1dn)}\right]
\leq \begin{pmatrix}3e\ln(C_1dn)\end{pmatrix}^{\frac d2}(C_1dn)^{-\frac{3d}2}
\label{e1}\,.\end{eqnarray} The union bound yields \begin{displaymath}\mathbb P
\left[\exists~x_l~\mbox{s.t}~||x_l-y(x_l)||\geq\sqrt{3d\ln(C_1dn)}\right]
\leq n\begin{pmatrix}3e\ln(C_1dn)\end{pmatrix}^{\frac d2}
(C_1dn)^{-\frac {3d}2} \leq\frac {\eta}{100}\,.\end{displaymath}
\end{proof}

\subsection{Proof of \cref{lem:4.2}}\label{p4.2}
\begin{proof}
This is essentially an application of coupon collector's problem.\\

Recall that $w_{min}\geq ck^{-1}$. Suppose that $m>k$ random 
independent $\mu$-samples, denoted as $x_1,\cdots,x_m$, have 
been picked up. Let $C:=C(k)$ be a positive integer valued function. 
For any $l\in[k_0]$, the probability -- that $x_1,\cdots,x_m$ contain 
no $\mu_l$-sample -- is at most $\left(1-ck^{-1}\right)^m\leq 
e^{-mck^{-1}}$. Thus, the probability -- that $x_1,\cdots,x_m$ 
contain no sample from at least one Gaussian component in 
$\mu$ -- is at most $e^{\ln k-mck^{-1}}$. We ensure this 
probability is at most $\frac{\eta}{300C}$ by having \begin{displaymath}m\geq 
\frac kc\left\lceil\log\left(\frac{300Ck}\eta\right)\right\rceil\,.\end{displaymath} It 
follows that, if at least \begin{eqnarray}n_0:=
\frac {Ck}c\left\lceil\log\left(\frac{300Ck}\eta\right)\right
\rceil\label{101}\end{eqnarray} random independent $\mu$-samples 
were picked up, then with probability at least $1-\frac \eta{300}$ 
all the components were needed to be sampled at least $C$ times.\\

Let $E$ denote the event that $n_0$ random independent $\mu$-samples 
contain at least $C$ points from each Gaussian component. For $l\in[k_0]$, 
let $A_l$ be the event that none of the $n_0$ random samples satisfy 
$||x_j-y_l||<2\sqrt d$. Applying Gaussian isoperimetric inequality 
(see \cref{lem:3.0}) and union bound,
we obtain \begin{eqnarray*}\mathbb P\left[\bigcup_{l\in[k_0]}A_l~
\middle|~E\right]\leq  2^{\log k-Cd}\,.\end{eqnarray*} Thus, letting 
$\label{102}C\geq\log\left(\frac{300k}{\eta}\right)$, it 
follows that \begin{eqnarray*}\mathbb P\left[\bigcap_{l\in[k]}A_l^c\right]&
\geq \mathbb P\left[\bigcap_{l\in[k_0]}A_l^c~\middle|~E\right]\mathbb P
\left[E\right]\\ &\geq\left(1-\frac \eta{300}\right)^2\,,\end{eqnarray*} provided 
\begin{displaymath}n\geq \frac kc\log\left\lceil\frac{300k}{\eta}\right\rceil\left\lceil\log\left
\lceil\frac{300k\log\left\lceil\frac{300k}{\eta}\right\rceil}\eta\right\rceil\right\rceil\,.\end{displaymath}
\end{proof}

\subsection{Proof of \cref{lem:3}}\label{p3}
\begin{proof}
Let $d\hat{\kappa}^2$ be a $\chi^2$ random variable with $d$ degrees of freedom, $i.e.$, the sum of squares of $d$ independent standard Gaussians. By the inequality in \cref{lem:3.0}, we know that
\begin{eqnarray}{\mathbb P}\left[\hat{\kappa} \geq \sqrt{\frac{C_{3.5}\ln k}{d}}+ 1\right]
&={\mathbb P}\left[d\hat{\kappa}^2-d \geq C_{3.5}\ln k+2\sqrt{C_{3.5}d\ln k}\right]\nonumber
\\ & \leq \exp\left(-\frac{C_{3.5}\ln k}2\right)\nonumber\\ &= k^{-\frac{C_{3.5}}2}\,.\label{eq:1}\end{eqnarray} For $z\in \mbox{supp}(\hat s)$, one has $\hat s(z)={\bar\Delta}^{-d}
\gamma_{(1/{\bar \Delta})}(z)$. Therefore, \begin{eqnarray*}\|\hat s - {\bar\Delta}^{-d}
\gamma_{(1/{\bar \Delta})}\|_{L^1}&=\int_{||{\bar\Delta}z||\geq \sqrt{C_{3.5}\ln k}+\sqrt d}
\gamma({\bar\Delta}z)~dz\\ &= {\bar\Delta}^{-d}\mathbb P\left[\hat{\kappa} \geq \sqrt{\frac{C_{3.5}\ln k}{d}}+ 1\right]\\ 
&\leq {\bar\Delta}^{-d}k^{-\frac{C_{3.5}}2}\,.\end{eqnarray*} The 
lemma now follows from the fact that ${\mathcal F}^{-1}$ is a bounded linear operator 
from $L^1$ to $L^\infty$ with operator norm $\left(2\pi\right)^{-\frac d2}$.
\end{proof}

\subsection{Proof of \cref{randomsample}}\label{p0}
\begin{proof}
For any $x\in\mathbb R^d$, one has \begin{eqnarray} &\left|(\gamma_{\bar\Delta}
\star\nu)(x)-(2\pi)^{- d}\sum_{j=1}^{k_0}w_j\int_{\mathcal B}
e^{-\frac{\bar\Delta^2||z||^2}2}e^{i(x-y_j)\cdot z}~dz\right|\nonumber\\ &\leq 
(2\pi)^{- d}\int_{\mathbb R^{\bar d}\setminus {\mathcal B}}
e^{-\frac{\bar\Delta^2||z||^2}2}~dz\nonumber\\  
&\leq k^{-\frac{C_{3.5}}4}\,.\label{hoeff0}\end{eqnarray} Hence, 
it suffices to estimate \begin{displaymath}I_x:=
(2\pi)^{-d}\sum_{j=1}^{k_0}w_j\int_{\mathcal B}e^{-\frac{\bar\Delta^2
||z||^2}2}e^{i(x-y_j)\cdot z}~dz\,.\end{displaymath} Next, one has \begin{eqnarray*}I_x&
= (2\pi)^{-\frac{ d}2}\int_{\mathcal B} e^{ix\cdot z}\left({\hat s}(z)\sum_{j=1}^{k_0}w_je^{-iy_j\cdot z}
\right)~dz\\ &= (2\pi)^{-\frac{ d}2}\mbox{vol}(B)~\mathbb E_z\left[e^{ix\cdot z}\left({\hat s}(z)
\sum_{j=1}^{k_0}w_je^{-iy_j\cdot z}\right)\right]\,,\end{eqnarray*} where 
$z$ is a sample from the  uniform probability distribution on ${\mathcal B}$. For brevity of notation, 
write \begin{eqnarray*}\phi(z)&=
(2\pi)^{-\frac{ d}2}\mbox{vol}({\mathcal B})\left({\hat s}(z)\sum_{j=1}^{k_0}w_je^{-iy_j\cdot z}
\right)\\ &=\mbox{vol}({\mathcal B}){\hat s}(z){\hat \nu}(z)\,,\end{eqnarray*} so that $I_x=
\mathbb E_z[e^{ix\cdot z}\phi(z)]$. By Ramanujan's approximation of $\Gamma$ 
(see \textit{theorem 1} of \cite{MR1850542}) we get  
\begin{eqnarray}\mbox{vol}({\mathcal B})&={\bar\Delta}^{-d}
\left(1+\sqrt{\frac{C_{3.5}\ln k}d}\right)^d\frac{(d\pi)^{\frac d2}}{\Gamma\left(\frac d2+1\right)}\nonumber\\
&\leq (2\pi)^{\frac d2}k^{C_{1.5}\ln\left(\frac {2C_{3.2}}c\right)+
C_{3.5}}\,,\end{eqnarray} so that \begin{eqnarray*}|\phi(z)|&\leq (2\pi)^{-\frac{ d}2}
\mbox{vol}({\mathcal B})~|{\hat s}(z)|\\ &< 
k^{C_{1.5}\ln\left(\frac {2C_{3.2}}c\right)+C_{3.5}}\\ &\leq k^{2C_{3.5}}\,.\end{eqnarray*}
If $z_1,\cdots,z_m$ are independent (Lebesgue) uniformly distributed points in ${\mathcal B}$, then 
by Hoeffding's inequality from \cref{hoeff2}, one has \begin{eqnarray}\label{hoeff6}\mathbb P\left[\left|I_x-
\frac 1m\sum_{l=1}^me^{ix\cdot z_l}\phi(z_l)\right|\geq k^{-2C_{3.5}}\right]
\leq k^{-2C_{3.5}}\,,\end{eqnarray} if we set \begin{eqnarray}\label{m1}m
= C_{0.1}C_{3.5}k^{8C_{3.5}+\frac{2C_{3.5}C_{3.2}^2}{c^2}+2
C_{1.5}\ln\left(\frac{2C_{3.2}}c\right)+C_{1.5}\ln(2\pi)}\ln k.\end{eqnarray} Recalling that 
$||\hat\nu||_{L^\infty},||\hat\mu_e||_{L^\infty}\leq (2\pi)^{-\frac{ d}2}$, and \begin{displaymath}\hat\nu(z)
=\mathbb E_{X\approx\mu}\left[(2\pi)^{-\frac{ d}2}e^{-iX\cdot z+\frac{||z||^2}2}
\right]=\mathbb E\left[e^{\frac{||z||^2}2}
\hat\mu_e(z)\right],\end{displaymath} for any fixed $z\in {\mathcal B}$, applying 
Hoeffding's inequality from \cref{hoeff2} once again, we obtain 
\begin{eqnarray*}\label{hoeff3}\mathbb P\left[\left|\hat\nu(z)-
e^{\frac{||z||^2}2}\hat\mu_e(z)\right|>\frac{k^{-C_{3.5}}}{\mbox{vol}({\mathcal B})}\right]
<\exp\left(-\frac{c_{0.1}n}{(2\pi)^{-\frac d2}k^{2C_{3.5}+\frac{2C_{3.2}^2C_{3.5}}{c^2}}
\mbox{vol}^2({\mathcal B})}\right)\,.\end{eqnarray*} In 
particular, letting \begin{displaymath}C_{3.6}\geq C_{0.1}C_{3.5}+C_{1.5}\left(\ln(2\pi)+2\ln\left(\frac{2C_{3.2}}c\right)\right)+C_{3.5}\left(8+\frac{2C_{3.2}^2}{c^2}\right)\end{displaymath} and \begin{eqnarray}\label{sample1}n&\geq
C_{3.6}k^{C_{3.6}+C_{3.5}}\ln\left(C_{3.6}k^{C_{3.6}+C_{3.5}}\ln k\right)\nonumber\\ m
&=C_{3.6}k^{C_{3.6}}\ln k\,,\end{eqnarray} one has 
\begin{eqnarray}\label{hoeff8}\mathbb P\left[\left|\hat\nu(z)-
e^{\frac{||z||^2}2}\hat\mu_e(z)\right|>\frac{k^{-C_{3.5}}}{\mbox{vol}({\mathcal B})}\right]
<m^{-1}k^{-C_{3.5}}\,.\end{eqnarray} By an application of the union bound, inequalities 
\eqref{hoeff0}, \eqref{hoeff6}, and \eqref{hoeff8} give us this proposition.
\end{proof}

\subsection{Proof of \cref{weight}}\label{p00}
\begin{proof}
Each output $q_l$ of \cref{alg:boost} (as mentioned in \cref{alg:findspikes} above) 
satisfies the inequality $||q_l-l||_2<\frac 14{diam}(Q_\ell)$. Let 
$\hat Q_\ell$ be a ball of radius $C_{3.3}\bar\Delta
\sqrt{\bar d}$, centered at $\ell$. We will choose $C_{3.3}=10^{-3}C_{3.2}
$ appropriately. Suppose that $y_{j_0}$ be the center 
such that $||y_{j_0}-q_\ell||_2<\frac 12 diam(Q_\ell)$; 
such a center may or may not exist, but we are only interested 
when it does exist. Without loss of generality (by translating the mixture
appropriately), we may assume 
that $y_{j_0}=0$.\\

Let $\lambda$ be the Lebesgue measure on $\mathbb R^{\bar d}$. We 
would like to have affirmative answers to each of the 
following three questions: \texttt{a)} if we can well-approximate 
\begin{displaymath}w_{j_0}=\int_{\mathbb R^{\bar d}}\gamma_{\bar\Delta}
\star\nu_{j_0}~d\lambda\end{displaymath} using the restricted 
integral \begin{displaymath} ^1\hat w=\int_{\hat Q_\ell}\gamma_{\bar\Delta}
\star\nu_{j_0}~d\lambda\,;\end{displaymath} \texttt{b)} 
if we can approximate $^1\hat w$ using 
the mixture integral \begin{displaymath} ^2\hat w=\int_{\hat Q_\ell}
\gamma_{\bar\Delta}\star\nu~d\lambda\,;\end{displaymath} and 
\texttt{c)} if we can approximate $^2\hat w$ using Monte-Carlo.\\

The first approximation amounts to bounding the following: \begin{eqnarray*}
\left|w_{j_0}-\int_{\hat Q_\ell}\gamma_{\bar\Delta}\star\nu_{j_0}~d\lambda
\right|&=&\int_{\mathbb R^{\bar d}\setminus \hat Q_\ell}
\gamma_{\bar\Delta}\star\nu_{j_0}~d\lambda\\
&=&w_{j_0}\int_{||x||\geq C_{3.3}\sqrt{\bar d}}\gamma_1
~d\lambda\,.\end{eqnarray*} By Gaussian concentration results, 
we have \begin{eqnarray}\left|w_{j_0}-\int_{\hat Q_\ell}
\gamma_{\bar\Delta}\star\nu_{j_0}~d\lambda
\right|&\leq &w_{j_0}C_{3.3}^{\bar d}e^{-\frac{\bar d(C_{3.3}^2-1)}2}\nonumber\\
&\leq&\frac {cw_{\min}}{10}\end{eqnarray} as soon as \begin{eqnarray}C_{3.3}\geq 
3\ln\left(\frac 1{10c}\right)\,.\end{eqnarray}

For the second approximation, we follow the general strategy 
as in \cref{lem5}. Letting $\omega_{\bar d}$ be the 
volume of unit ball in $\mathbb R^{\bar d}$, and writing $p_m:=|S_m|$ where 
\[S_m:=\begin{pmatrix}r\in [k_0]:~\frac m2\Delta\sqrt{\bar d}
\leq ||y_r||<\frac{m+1}2\Delta\sqrt{\bar d}\end{pmatrix},\] one 
has \begin{eqnarray*}\omega_{\bar d}\left(\frac {\Delta\sqrt{\bar d}}
2\right)^{\bar d}p_m&\leq& \omega_{\bar d}\left(\frac {\Delta\sqrt{\bar d}}
2\right)^{\bar d}\left(\left(m+2\right)^{\bar d}-\left(m-1\right)^{\bar 
d}\right)\\ \Rightarrow\hspace{1cm}p_m&\leq& 
\left(\left(m+2\right)^{\bar d}-\left(m-1\right)^{\bar 
d}\right)\\ &<& \left(m+2\right)^{\bar d}\,,
\end{eqnarray*} which gives \begin{eqnarray}
&&\left|\int_{\hat Q_\ell}\gamma_{\bar \Delta} \star \nu
~d\lambda-\int_{\hat Q_\ell}\gamma_{\bar \Delta} \star \nu_{j_0}
~d\lambda\right|\nonumber\\ &=&(2\pi\bar\Delta^2)^{-\frac {\bar d}2}\sum_{m=2
}^\infty \sum_{r\in[k_0]\setminus\{j_0\},  
\frac m2\leq ||\frac{y_r}{\Delta\sqrt{\bar d}}||< \frac{m+1}2}
w_r\int_{\hat Q_\ell}e^{-\frac{C_{3.2}^2||x-y_r||^2}
{2\Delta^2}}~dx\nonumber\\ 
&< &(ck)^{-1}\sum_{m=1}^\infty e^{-\frac{\bar dC_{3.2}^2(248m)^2}2}\nonumber\\
&< &(ck)^{-1}\sum_{m=1}^\infty e^{-30000\bar dC_{3.2}^2m^2}\nonumber\\
&<&\frac{cw_{min}}{10}\frac {10}{c^2e^{30000\bar dC_{3.2}^2}}\,.\end{eqnarray}

We proceed with the third approximation as follows. 
Let $v:=\mbox{vol}(\hat Q_\ell)$; 
by Ramanujan's approximation of $\Gamma$ (see 
\textit{theorem 1} of \cite{MR1850542} 
) we get \begin{displaymath}v\leq \begin{pmatrix}{\frac{2\pi eC_{3.3}^2}{\bar d}}
\end{pmatrix}^{\frac {\bar d}2}{\bar d}^{\frac {\bar d}2}
\bar\Delta^{\bar d}\leq \bar\Delta^{\bar d}k^{\frac 12C_{1.5}\log(2\pi eC_{3.3}^2)}\,.\end{displaymath} Consider 
the identity \begin{eqnarray*}\int_{\hat Q_\ell}\gamma_{\bar\Delta}\star\nu~dx&=
&{\mathbb E}_{U_{\hat Q_\ell}}\left[v\gamma_{\bar\Delta}\star
\nu\right]\,,\end{eqnarray*} where $U_{\hat Q_\ell}$ is the uniform 
distribution on $\hat Q_\ell$. Let $z_1,\cdots,z_p\in \hat Q_\ell$ be 
uniformly random points; since $||\gamma\star \nu||_{L^\infty}\leq 
(2\pi\bar\Delta^2)^{-\frac {\bar d}2}$, an application of Hoeffding's 
\cref{hoeff1} implies \begin{eqnarray*}\mathbb P\left[
\left|\frac 1p\sum_{j=1}^p\gamma_{\bar\Delta}\star\nu(z_j)
-{\mathbb E}_{U_{\hat Q_\ell}}\left[\gamma_{\bar\Delta}\ast\nu\right]
\right|>\frac{\delta}v\right]&\leq&e^{-\frac{c_{0.1}(2\pi\bar\Delta^2)^{\frac d2}p
\delta^2}{v^2}}\,.\end{eqnarray*} Thus, for \begin{displaymath}p\geq 
\frac{C_{3.5}}{c_{0.1}\delta^2}k^{\frac {C_{1.5}}2\ln\left(eC_{3.3}^2\right)}
\ln k,\end{displaymath} an application of \cref{randomsample} and a union bound 
implies \begin{eqnarray*}&&\mathbb P
\left[\left|\frac 1p\sum_{j=1}^p\Re (f_{z_j})-{\mathbb E}_{U_{\hat Q_\ell}}
\left[\gamma_{\bar\Delta}\star\nu_j\right]\right|>\frac \delta v+
3 k^{-C_{3.7}}\right]\\ &\leq& \mathbb P
\left[\exists~j\in [p]~\mbox{s.t}~\left|\Re (f_{z_j})-\gamma_{\bar\Delta}\star\nu(z_j)\right|>
3k^{-C_{3.7}}\right]\\ &&\hspace{1cm} +~\mathbb P
\left[\left|\frac 1p\sum_{j=1}^p\gamma_{\bar\Delta}\star\nu(z_j)-{\mathbb E}_{U_{\hat Q_\ell}}
\left[\gamma_{\bar\Delta}\star\nu_j\right]\right|>\frac\delta v\right]\\
&\leq & 20k^{-C_{3.5}}\,.\end{eqnarray*} 

We write $\tilde\nu_{j_0}=\nu-\nu_{j_0}$, and \begin{displaymath}\hat w_{\ell}:=\frac vp\sum_{j=1}^p
\Re (f_{z_j})\,,\end{displaymath} where $z_1,\cdots,z_p$ are independent 
uniformly random points in $\hat Q_\ell$. Then 
\begin{eqnarray*}\left|w_{j_0}-\hat w_\ell\right| &
\leq \int_{\mathbb R^{\bar d}\setminus \hat Q_\ell}
\gamma_{\bar\Delta}\star\nu_{j_0}~d\lambda+\left|\int_{\hat Q_\ell}
\gamma_{\bar\Delta}\star\nu~d\lambda-\hat w_\ell\right|
+\left|\int_{\hat Q_\ell}\gamma_{\bar\Delta}\star \tilde\nu_{j_0}~d\lambda\right|\\ &\leq
\frac{cw_{min}}5+\left(\delta+3vk^{-C_{3.7}}\right)\,.\end{eqnarray*}
\end{proof}

\section{Analysis of \cref{alg:findspikes}}\label{an:random}
The following lemma shows that the number of cubes in \cref{alg:findspikes} that need to be considered is polynomial in $k$.
\begin{lemma}\label{lem:4.2.8}
For any $x \in \mathbb R^d$, \begin{displaymath}\left|\left(\frac{\bar\Delta}{1000}  \sqrt{\frac{d}{C_{1.5} \log k} 
}\right)\mathbb Z^{d} \bigcap  B_2(x, 2\sqrt{d})\right| \leq k^{C_7}\,.\end{displaymath}
\end{lemma}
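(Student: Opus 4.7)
The plan is a standard volume-packing argument followed by a one-variable optimization in $\bar d$. Set $s := \frac{\bar\Delta}{1000}\sqrt{\frac{\bar d}{C_{1.5}\log k}}$ for the spacing. To each lattice point $\ell \in s\cdot \Z^{\bar d}$ I would associate the axis-parallel cube $\ell + [-s/2,s/2]^{\bar d}$; these cubes tile $\R^{\bar d}$ with disjoint interiors, each cube has Lebesgue volume $s^{\bar d}$, and each cube has $\ell_2$ radius $s\sqrt{\bar d}/2$. Hence if $\ell \in B_2(x, 2\sqrt{\bar d})$, then the associated cube is contained in the enlarged ball $B_2\!\left(x, 2\sqrt{\bar d} + s\sqrt{\bar d}/2\right)$. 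Since the cubes are disjoint, the number of such $\ell$ is at most $\mathrm{vol}\!\left(B_2(x, 2\sqrt{\bar d} + s\sqrt{\bar d}/2)\right)/s^{\bar d}$.

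Next I would evaluate this ratio. Using $V_{\bar d} = \pi^{\bar d/2}/\Gamma(\bar d/2 + 1)$ together with the (Stirling / Ramanujan) lower bound $\Gamma(\bar d/2+1) \geq (\bar d/2)^{\bar d/2} e^{-\bar d/2}$, the volume of a ball of radius $R$ in $\R^{\bar d}$ is at most $\left(\frac{2\pi e R^2}{\bar d}\right)^{\bar d/2}$. Plugging in $R = 2\sqrt{\bar d} + s\sqrt{\bar d}/2 \leq 3\sqrt{\bar d}$ (using $s \leq \bar\Delta/1000 < 1$), the count becomes at most
\[
\left(\frac{2\pi e \cdot 9 \bar d}{\bar d}\right)^{\bar d/2}\cdot s^{-\bar d}
\;=\; \left(18\pi e\right)^{\bar d/2}\cdot \left(\frac{1000}{\bar\Delta}\right)^{\bar d}\left(\frac{C_{1.5}\log k}{\bar d}\right)^{\bar d/2},
\]
so everything reduces to controlling $\left(\frac{a\log k}{\bar d}\right)^{\bar d/2}$ where $a = 18\pi e (1000)^2 C_{1.5}/\bar\Delta^2$ is a universal constant (using $\bar\Delta = \Delta/C_{3.2} \geq 2c_0/C_{3.2}$, which is a constant).

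The only genuinely non-routine moment, and thus the main obstacle, is the optimization over $\bar d$: since the excerpt allows $\bar d$ anywhere in $\{1,\dots, C_{1.5}\log k\}$, the exponent $\bar d/2$ is not fixed and the naive estimate $(C\sqrt{\log k})^{\bar d}$ would blow up to $k^{O(\log\log k)}$ at $\bar d = C_{1.5}\log k$. The fix is to treat $(a\log k/\bar d)^{\bar d/2}$ as a function of the real variable $\bar d \in (0,\infty)$ and locate its maximum. Logarithmic differentiation gives the critical point $\bar d^{\ast} = a\log k /e$, where the function attains the value $\exp\!\left(\frac{a\log k}{2e}\right) = k^{a/(2e)}$, which is a bound of the desired form $k^{C_7}$ with $C_7 := a/(2e)$ an absolute constant.

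Combining the two paragraphs, the lattice-point count is at most $k^{C_7}$, completing the proof.
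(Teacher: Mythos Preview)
Your proof is correct and follows essentially the same route as the paper: a volume-packing bound on the lattice-point count, Stirling/Ramanujan for the ball volume, and then a one-variable optimization in $\bar d$ to kill the potentially super-polynomial factor. The only cosmetic difference is in how the optimization is packaged: the paper rewrites $\bigl(C\sqrt{\log k}/\sqrt{\bar d}\bigr)^{\bar d}$ as $\bigl(\alpha^{1/\alpha}\bigr)^{C\sqrt{\bar d\log k}}$ with $\alpha=C\sqrt{\log k}/\sqrt{\bar d}$, invokes $\alpha^{1/\alpha}\le e^{1/e}$, and then uses $\bar d\le C_{1.5}\log k$ to bound the remaining exponent, whereas you directly maximize $(a\log k/\bar d)^{\bar d/2}$ by calculus---these are the same computation in different coordinates.
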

\begin{proof}
Set $r$ to $\sqrt{{d}}$. We observe that the number of lattice cubes of side length $c r/\sqrt{\log k}$ centered at a point $x$ belonging to $\left(c r/\sqrt{\log k}\right)\mathbb Z^{d}$ that intersect a ball $B$ of radius $r$ in $d$ dimensions is less or equal to to the number of lattice points inside a concentric ball $B'$ of dimension $d$ of radius $r + \frac{c \sqrt{d} r}{\sqrt{\log k}}$. Every lattice cube of side length $c r/\sqrt{\log k}$ centered at a lattice point belonging to $\left(c r/\sqrt{\log k}\right)\mathbb Z^{d} \cap B'$ is contained in the ball $B''$ with center $x$ and radius $r + \frac{2c \sqrt{d} r}{\sqrt{\log k}}$. By volumetric considerations, 
$|\left(c r/\sqrt{\log k}\right)\mathbb Z^{d}\cap B'|$ is therefore bounded above by $\frac{vol(B'')}{\left(c r/\sqrt{\log k}\right)^{d}}$.
We write $v_{d}:=\mbox{vol}\begin{pmatrix}B_2(0,\sqrt {\bar d})\end{pmatrix}$. By Ramanujan's approximation of $\Gamma$ (see \textit{theorem 1} of \cite{MR1850542}) we get \begin{displaymath}v_{d}\leq {\frac 1{\sqrt {\pi}}}\begin{pmatrix}{\frac{2\pi e}{d}}
\end{pmatrix}^{\frac {d}2}d^{\frac {{d}}2}\,.\end{displaymath} This tells us that 
\begin{eqnarray*}\frac{vol(B'')}{\left(c r/\sqrt{\log k}\right)^{d}} & \leq& \left(\frac{C\sqrt{\log k}}{\sqrt{d}}\right)^{d}\\
& \leq& \left(\frac{C\sqrt{\log k}}{\sqrt{d}}\right)^{\left(\frac{\sqrt{d}}{C\sqrt{\log k}}\right)\left(C \sqrt{d \log k}\right)}\\
& \leq& \left(e^{1/e}\right)^{\left(C \sqrt{d \log k}\right)}\\
& \leq& k^{C_{7}}\,.\end{eqnarray*}
We have used the fact that for $\alpha > 0$, $\alpha^{\alpha^{-1}}$ is maximized when $\alpha = e$, a fact easily verified using calculus.
\end{proof}

We will need some results on the structure of $\gamma_{\bar \Delta}\star\nu$. 

\begin{lemma}\label{lem5}
If $q_\ell \in B\left(y_j, diam(Q_\ell)\right)$ for some $j\in [k_0]$, then the restriction of $\gamma_{\bar \Delta} 
\star \nu$ to $Q_\ell$ is approximately log-concave in the following sense. For $\nu_j:=w_j\delta_{y_j}$, and $x \in Q_\ell$, we have \begin{displaymath}
0\leq \log (\gamma_{\bar \Delta} \star \nu)
(x)-\log(\gamma_{\bar \Delta} \star \nu_j)(x)
\leq~ {\frac {C_{103}}{c^2{\bar d}^{C_{100}}}}\,.\end{displaymath}
\end{lemma}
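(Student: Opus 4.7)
The lower bound is immediate: decomposing $\nu = \nu_j + \sum_{l\neq j} w_l \delta_{y_l}$ as a sum of non-negative measures and convolving with the strictly positive $\gamma_{\bar\Delta}$ gives $\gamma_{\bar\Delta}\star\nu \geq \gamma_{\bar\Delta}\star\nu_j$ pointwise, whence monotonicity of $\log$ yields the lower bound $0$. For the upper bound, the plan is to write
\[
\log(\gamma_{\bar\Delta}\star\nu)(x)-\log(\gamma_{\bar\Delta}\star\nu_j)(x)=\log(1+R(x))\leq R(x),
\]
where
\[
R(x):=\sum_{l\neq j}\frac{w_l}{w_j}\exp\!\left(-\frac{\|x-y_l\|^2-\|x-y_j\|^2}{2\bar\Delta^2}\right),
\]
thereby reducing the claim to an upper bound on $R(x)$.

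The estimate of $R(x)$ rests on two geometric ingredients. First, for $x\in Q_\ell$, applying the triangle inequality twice using $q_\ell\in B(y_j,\mathrm{diam}(Q_\ell))$ and $x,q_\ell\in Q_\ell$ gives $\|x-y_j\|\leq 4\,\mathrm{radius}(Q_\ell) = \tfrac{\bar\Delta\bar d}{100\sqrt{C_{1.5}\log k}}\leq \tfrac{\bar\Delta\sqrt{\bar d}}{100}$, using $\bar d\leq C_{1.5}\log k$. Second, by the polarization identity $\|x-y_l\|^2-\|x-y_j\|^2=\|y_l-y_j\|^2-2(x-y_j)\cdot(y_l-y_j)$, Cauchy--Schwarz, and the minimum separation $\|y_l-y_j\|\geq \sqrt{\bar d}\,\Delta = \sqrt{\bar d}\,C_{3.2}\bar\Delta$,
\[
\|x-y_l\|^2-\|x-y_j\|^2 \geq \|y_l-y_j\|^2\Bigl(1-\tfrac{1}{50 C_{3.2}}\Bigr)\geq \bar d\,C_{3.2}^2\bar\Delta^2\Bigl(1-\tfrac{1}{50C_{3.2}}\Bigr),
\]
so every term of $R(x)$ is bounded uniformly by $e^{-\beta\bar d}$, where $\beta:=(C_{3.2}^2/2)(1-1/(50 C_{3.2}))$.

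With this uniform bound and the trivial $\sum_{l\neq j}w_l\leq 1$, the plan produces $R(x)\leq e^{-\beta\bar d}/w_j$, which crucially avoids any dependence on the number of components. Converting the exponential into polynomial form via the elementary inequality $e^{-x}\leq 15!/x^{15}$ at $x=\beta\bar d$ yields $R(x)=O(\bar d^{-15} C_{3.2}^{-30}/w_j)$; combined with $w_j\geq c$ and absorbing the $(1-1/(50C_{3.2}))^{-15}$ correction together with the mild factor $e^{\|x-y_j\|^2/(2\bar\Delta^2)}\leq e^{\bar d/20000}$ into the constant $e^{1.21/20000}$, the numerical constants collect into the claimed bound. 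The only point that requires real care is the uniform per-term decay in the polarization step---in particular, the proximity bound of Step $1$ must be tight enough that the $(1-1/(50 C_{3.2}))$ loss is acceptable; once that is in place, avoiding a count-based bound on the centers (instead invoking only the total-weight bound $\sum_l w_l\leq 1$) is what removes any latent factor of $k$ from the estimate and lets the Gaussian decay $e^{-\beta\bar d}$ do all the work.
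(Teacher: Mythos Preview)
Your lower bound and the geometric preliminaries (the bound $\|x-y_j\|\leq \bar\Delta\sqrt{\bar d}/100$ and the polarization estimate giving a per-term decay $e^{-\beta\bar d}$) are all correct, and the reduction to bounding $R(x)$ is the right move. The gap is in the final step: you invoke ``$w_j\geq c$'', but the paper's standing assumption (stated in the abstract and used explicitly in the paper's own proof of this lemma) is only $w_j\geq c/k$. With the correct lower bound your estimate becomes
\[
R(x)\;\leq\;\frac{1}{w_j}\sum_{l\neq j}w_l\,e^{-\beta\bar d}\;\leq\;\frac{k}{c}\,e^{-\beta\bar d},
\]
and this extra factor of $k$ is not absorbed by anything in your argument. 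In particular, when $\bar d$ is small compared to $\log k$ (which is a legitimate regime here, since $\bar d=\min(d,O(\log k))$ and $d$ may be $O(1)$), the bound $(k/c)e^{-\beta\bar d}$ grows without bound in $k$, whereas the target $\frac{10^{31}e^{1.21/20000}}{C_{3.2}^{30}c^2\bar d^{15}}$ is a fixed constant. So the proof as written does not close.

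The paper handles exactly this issue by \emph{not} using a single uniform per-term bound. It decomposes the centers $y_r$ with $r\neq j$ into shells $S_m=\{r:\ \tfrac{m}{2}\Delta\sqrt{\bar d}\leq\|x-y_r\|<\tfrac{m+1}{2}\Delta\sqrt{\bar d}\}$, bounds the cardinality $|S_m|<(m+2)^{\bar d}$ by a packing argument, and then uses the per-weight upper bound $w_r\leq (ck)^{-1}$ so that $\sum_{r\in S_m}w_r\leq (m+2)^{\bar d}(ck)^{-1}$. This produces a factor $(ck)^{-1}$ in the numerator which exactly cancels the $k/c$ coming from $1/w_j$; the remaining series $\sum_m(m+2)^{\bar d}e^{-C_{3.2}^2m^2\bar d/8}$ is then bounded via $e^{-x}\leq 6^{15}/x^{15}$ as you do. In other words, the ``count-based'' shell/packing argument you were trying to avoid is precisely what makes the $k$'s cancel; your claim that the total-weight bound alone ``removes any latent factor of $k$'' is where the reasoning breaks down.
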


\begin{proof}
Fix $x\in Q_\ell$, and write $a_r:=x-y_r$ for $r\in [k_0]$. One 
has \[||a_j||\leq {\frac{1.1\Delta\sqrt{\bar d}}{100C_{3.2}}}
\sqrt{\frac{\bar d}{C_{1.5}\log k}}+{\frac{\Delta\sqrt{\bar d}}{200C_{3.2}}}
\sqrt{\frac{\bar d}{C_{1.5}\log k}}\leq {\frac{3.2\Delta\sqrt{\bar d}}
{200C_{3.2}}}\,,\] and \[||a_r-a_s||\geq \Delta\sqrt{\bar d}\] if $r\neq s$. 
Rewriting $\Delta=C_{3.2}\bar\Delta$, one has \begin{eqnarray*}(2\pi{\bar
\Delta}^2)^{\frac {\bar d}2}\left|(\gamma_{\bar \Delta} \star \nu)(x)-(\gamma
_{\bar \Delta} \star \nu_j)(x)\right|&=&\sum_{r\in[k_0]\setminus\{j\}}w_r
e^{-\frac{||a_r||^2}{2{\bar\Delta}^2}}\\ &=&\sum_{m=1}^\infty \displaystyle
\sum_{{r\in[k_0]\setminus\{j\},~
\frac m2\leq ||\frac{a_r}{\Delta\sqrt{\bar d}}||< \frac{m+1}2}}w_re^{-\frac{C_{3.2}^2
||a_r||^2}{2\Delta^2}}\,.\end{eqnarray*} We write $p_m:=|S_m|$ where 
\begin{displaymath}S_m:=\begin{pmatrix}r\in [k_0]:~\frac m2\Delta\sqrt{\bar d}\leq ||a_r||<\frac{m+1}2\Delta
\sqrt{\bar d}\end{pmatrix}\,.\end{displaymath} Since $||a_r-a_s||\geq \Delta{\bar d}$, we can put disjoint 
balls of radius $0.5\Delta\sqrt{\bar d}$ around each $a_r$. Thus, letting 
$\omega_{\bar d}$ be the volume of unit ball in $\mathbb R^{\bar d}$, one 
has \begin{eqnarray*}\omega_{\bar d}\left(\frac {\Delta\sqrt{\bar d}}
2\right)^{\bar d}p_m&\leq& \omega_{\bar d}\left(\frac {\Delta\sqrt{\bar d}}
2\right)^{\bar d}\left(\left(m+2\right)^{\bar d}-\left(m-1\right)^{\bar 
d}\right)\\ \Rightarrow\hspace{1cm}p_m&\leq& \left(\left(m+2\right)^{\bar d}-\left(m-1\right)^{\bar 
d}\right)\\ &<& \left(m+2\right)^d\,,\end{eqnarray*} which gives \begin{eqnarray*}(2\pi{\bar
\Delta}^2)^{\frac {\bar d}2}\left|(\gamma_{\bar \Delta} \star \nu)
(x)-(\gamma_{\bar \Delta} \star \nu_j)(x)\right| &=&\sum_{m=1
}^\infty \sum_{{r\in[k_0]\setminus\{j\},~
\frac m2\leq ||\frac{a_r}{\Delta\sqrt{\bar d}}||< \frac{m+1}2}}w_re^{-\frac{
C_{3.2}^2||a_r||^2}{2\Delta^2}}\\ &\leq &(ck)^{-1}\sum_{m=1
}^\infty e^{-\frac{d(C_{3.2}^2m^2-8\ln \left(m+2\right))}{8}}\\ &\leq &(ck)^{-1}\sum_{m=1
}^\infty e^{-\frac{d(C_{3.2}^2m^2)}{16}}\,.\end{eqnarray*} We 
use $e^{-x}<\frac{C_{101}^{C_{100}}}{x^{C_{100}}}$ and $\sum m^{-2C_{100}}< 10$
to obtain \begin{eqnarray*}(2\pi{\bar\Delta}^2)^{\frac {\bar d}2}\left|(\gamma_{\bar \Delta} \star \nu)
(x)-(\gamma_{\bar \Delta} \star \nu_j)(x)\right| 
&\leq &(ck)^{-1}\sum_{m=1}^\infty e^{-\frac{{\bar d}C_{3.2}^2m^2}{16}}\\ &\leq 
&\frac {(16C_{101})^{C_{100}}}{ck{\bar d}^{C_{100}}C_{3.2}^{2C_{100}}} \sum_{m=1}^\infty \frac 1{m^{2C_{100}}}\\
&\leq &\frac {C_{102}}{ck{\bar d}^{C_{100}}}\,.\end{eqnarray*} Note 
that $\gamma_{\bar\Delta}\star\nu_j$ 
is log-concave. Moreover, for any $x\in Q_l$, one has 
\begin{eqnarray*}(\gamma_{\bar\Delta}\star\nu_j)(x)&=&
(2\pi\bar\Delta^2)^{-\frac{\bar d}2}w_je^{-\frac{||x-y_j||^2}{2{\bar\Delta}^2}}\\
&\geq &{\frac c{k(2\pi{\bar\Delta}^2)^{\frac{\bar d}2}}}e^{-\frac{1.21{\bar d}}{20000C_{1.5}\ln k}}\\
&\geq &{\frac c{k(2\pi{\bar\Delta}^2)^{\frac{\bar d}2}}}e^{-\frac{1.21}
{20000}}\,,\end{eqnarray*} so that \begin{eqnarray*}\left|\frac{(\gamma_{\bar \Delta} \star \nu)
(x)}{(\gamma_{\bar \Delta} \star \nu_j)(x)}-1\right| 
&\leq & \frac 1{(2\pi{\bar \Delta}^2)^{\frac {\bar d}2}}\frac {C_{102}}{ck{\bar d}^{C_{100}}}
{\frac {k(2\pi{\bar\Delta}^2)^{\frac{\bar d}2}}c}e^{\frac{1.21}{20000}}\\
&\leq & {\frac {C_{102}e^{\frac{1.21}{20000}}}{c^2{\bar d}^{C_{100}}}}\,.
\end{eqnarray*} This gives \begin{eqnarray}\label{logcon}0&\leq& \log (\gamma_{\bar \Delta} \star \nu)
(x)-\log(\gamma_{\bar \Delta} \star \nu_j)(x)\nonumber\\
& = &\log\left(1+{\frac {C_{102}e^{\frac{1.21}{20000}}}{c^2{\bar d}^{C_{100}}}}\right)\nonumber\\
&\leq & {\frac {C_{102}e^{\frac{1.21}{20000}}}{c^2{\bar d}^{C_{100}}}}\,.\end{eqnarray}
\end{proof}

\begin{remark} 
Note that, by \cref{rem1} and convexity of $Q_\ell$, if \[
q_\ell \in B\left(y_j, \left(\frac{\bar\Delta \bar d}{200}
\sqrt{\frac{1}{C_{1.5} \log k} }\right)\right)\] then (with high probability) 
$\log (\gamma_{\bar\Delta}\star\nu_j)$ is $t$-Lipschitz 
for \[t\leq \frac{(ck\sqrt{\ln(C_1dn)}+C_{3.2})\sqrt{C_{1.5}\log k}}{c^3}\,.\]\qed
\end{remark}

The following lemma shows that every true spike of $\gamma_{\bar \Delta}
\star \nu$ corresponding to a center gets detected by \cref{alg:findspikes}.
\begin{lemma}\label{lem:5}
If $q_\ell \in B\left(y_j, \left(\frac{\bar\Delta \bar d}{200}
\sqrt{\frac{1}{C_{1.5} \log k} }\right)\right)$ for some $j \in [k_0]$, then  \begin{eqnarray*} (\gamma_{\bar \Delta}
\star \nu)(q_\ell) - k^{- C_{3.5}} \geq  (\frac{w_{min}}{2})\gamma_{\bar \Delta}(0)\,.\end{eqnarray*}\qed
\end{lemma}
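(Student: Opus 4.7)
The plan is to invoke the pointwise inequality already produced by Lemma~\ref{lem5} to reduce the convolution $(\gamma_{\bar\Delta}\star\nu)(q_\ell)$ to a single Gaussian term centered at $y_j$, and then use the smallness of $\|q_\ell - y_j\|$ to argue that this surviving Gaussian is within a constant factor of its peak $\gamma_{\bar\Delta}(0)$.

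Step~1 (reduction to one Gaussian). Lemma~\ref{lem5} (or, more elementarily, non-negativity of the other summands) gives
\[
(\gamma_{\bar\Delta}\star\nu)(q_\ell) \;=\; \sum_{r=1}^{k_0} w_r\,\gamma_{\bar\Delta}(q_\ell - y_r) \;\geq\; w_j\,\gamma_{\bar\Delta}(q_\ell - y_j) \;\geq\; w_{min}\,\gamma_{\bar\Delta}(q_\ell - y_j).
\]

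Step~2 (peak approximation). Writing $\gamma_{\bar\Delta}(q_\ell - y_j) = \gamma_{\bar\Delta}(0)\,\exp\!\bigl(-\|q_\ell - y_j\|^2/(2\bar\Delta^2)\bigr)$ and plugging in the hypothesis $\|q_\ell - y_j\| \leq \frac{\bar\Delta \bar d}{200\sqrt{C_{1.5}\log k}}$ together with $\bar d \leq C_{1.5}\log k$ (the regime of Section 4),
\[
\frac{\|q_\ell - y_j\|^2}{2\bar\Delta^2} \;\leq\; \frac{\bar d^2}{80000\, C_{1.5}\log k} \;\leq\; \frac{\bar d}{80000}.
\]
A suitable calibration of the universal constants makes this exponent a small absolute quantity, yielding $\gamma_{\bar\Delta}(q_\ell - y_j) \geq \tfrac{3}{4}\gamma_{\bar\Delta}(0)$, and therefore
\[
(\gamma_{\bar\Delta}\star\nu)(q_\ell) \;\geq\; \tfrac{3}{4}\, w_{min}\, \gamma_{\bar\Delta}(0) \;=\; \tfrac{w_{min}}{2}\gamma_{\bar\Delta}(0) \;+\; \tfrac{w_{min}}{4}\gamma_{\bar\Delta}(0).
\]

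Step~3 (absorbing $k^{-C_{3.5}}$). The conclusion then follows provided the slack $\tfrac{w_{min}}{4}\gamma_{\bar\Delta}(0)$ dominates $k^{-C_{3.5}}$. Since $w_{min}\geq c/k$ and $\gamma_{\bar\Delta}(0) = (2\pi\bar\Delta^2)^{-\bar d/2}$, this reduces to a polynomial-in-$k$ inequality that holds once $C_{3.5}$ is taken large relative to $C_{3.2}$ and $C_{1.5}$ in the constant hierarchy declared at the start of Section 3. The Gaussian estimates themselves are elementary; the only real subtlety — and the main obstacle — is the bookkeeping to verify that the choice of $C_{1.5}$ making the exponent in Step 2 small is compatible with the JL dimension bound, and that $C_{3.5}$ is indeed chosen large enough to swamp the multiplicative and volumetric factors appearing in $\gamma_{\bar\Delta}(0)$.
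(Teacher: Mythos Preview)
Your route is genuinely different from the paper's: you drop all but the $j$-th summand and bound $w_j\gamma_{\bar\Delta}(q_\ell-y_j)$ directly, whereas the paper invokes the fact that $q_\ell$ is the output of the zeroth-order concave optimizer $\mathfrak{A}_0$ (Fact~\ref{fact:1}) applied on $Q_\ell$, so that $(\gamma_{\bar\Delta}\star\nu)(q_\ell)$ is within a multiplicative $1+O(\bar d^{-15})$ of $m_\ell:=\max_{x\in Q_\ell}(\gamma_{\bar\Delta}\star\nu)(x)$, and then argues that $m_\ell$ itself is at least $w_{min}\gamma_{\bar\Delta}(0)$ up to negligible error.

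The problem with your argument is Step~2. You correctly derive
\[
\frac{\|q_\ell-y_j\|^2}{2\bar\Delta^2}\;\le\;\frac{\bar d^2}{80000\,C_{1.5}\log k}\;\le\;\frac{\bar d}{80000},
\]
but then assert that ``a suitable calibration of the universal constants makes this exponent a small absolute quantity.'' It does not: $\bar d$ is not a universal constant. In the regime of Section~4 one has $\bar d$ ranging up to $C_{1.5}\log k$, so the exponent can be $\Theta(\log k)$ and hence $\exp(-\bar d/80000)=k^{-\Theta(1)}$, not $\ge 3/4$. Shrinking $C_{1.5}$ is not an option, since the Johnson--Lindenstrauss step in $LearnMixture$ requires $\bar d=\Omega(\log k)$; and once $C_{1.5}$ is fixed, the bound still degrades without limit as $k\to\infty$. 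Thus your single-term lower bound yields only $(\gamma_{\bar\Delta}\star\nu)(q_\ell)\ge w_{min}\gamma_{\bar\Delta}(0)\cdot k^{-\Theta(1)}$, which for large $k$ is far below $(w_{min}/2)\gamma_{\bar\Delta}(0)$.

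This is exactly why the paper's argument does not evaluate at $q_\ell$ directly but instead compares to $m_\ell$: the maximum over $Q_\ell$ is attained at (or arbitrarily close to) $y_j$ itself, where the value is $\ge w_j\gamma_{\bar\Delta}(0)$ with no exponential loss, and the optimization guarantee transfers this to $q_\ell$. Your approach discards precisely the piece of information---that $q_\ell$ is a near-maximizer---that is needed to avoid the $k^{-\Theta(1)}$ decay.
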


\begin{proof}
By \cref{lem5} the restriction to $Q_\ell$ of $\gamma_{\bar \Delta}
\star \nu$ is approximately log-concave. The output $q_\ell$ of stochastic optimization 
algorithm ${\mathfrak A}_0$ satisfies \begin{eqnarray*}\left|\log m_\ell-\log (\gamma_{\bar \Delta} \star \nu)
(q_\ell)\right|\leq {\frac 1{C_{103.1}{\bar d}^{C_{100}}}}\,,\end{eqnarray*} where $m_\ell:=\max\{(\gamma_{\bar \Delta} \star \nu)
(x): x\in Q_\ell\}$. Equivalently, \begin{eqnarray*}m_\ell-(\gamma_{\bar \Delta} \star \nu)(q_\ell)&\leq&
(\gamma_{\bar \Delta} \star \nu)(q_\ell) \left(1-e^{-\frac 1{C_{103.1}{\bar d}^{C_{100}}}}\right)\\ &\leq&
\frac{(\gamma_{\bar \Delta} \star \nu)(q_\ell)}{C_{103.1}{\bar d}^{C_{100}}}\\ \Rightarrow
\hspace{1cm}(\gamma_{\bar \Delta} \star \nu)(q_\ell)&\geq&
m_\ell-\frac{(\gamma_{\bar \Delta} \star \nu)(q_\ell)}{C_{103.1}{\bar d}^{C_{100}}}\end{eqnarray*} 
which proves the lemma.
\end{proof}

The next lemma shows that there are no false spikes in $\gamma_{\bar \Delta} \star \nu$. 

\begin{lemma}\label{lem:8}
If $(\gamma_{\bar \Delta} \star \nu)(q_\ell) + k^{- C_{3.5}} \geq (\frac{w_{min}}{2})
\gamma_{\bar \Delta}(0)$, then there exists some $j\in [k_0]$ such that $q_\ell \in 
B\left(y_j, \frac{\sqrt{\bar d} (\Delta\bar\Delta)^{\frac 12}}{5}\right)$.\qed
\end{lemma}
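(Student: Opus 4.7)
The plan is to prove the contrapositive: assume $q_\ell \notin B(y_j, \rho)$ for all $j \in [k_0]$, where $\rho := \sqrt{\bar d}(\Delta\bar\Delta)^{1/2}/5$, and deduce that $(\gamma_{\bar\Delta}\star\nu)(q_\ell) + k^{-C_{3.5}} < (w_{min}/2)\gamma_{\bar\Delta}(0)$. After dividing by $\gamma_{\bar\Delta}(0)=(2\pi\bar\Delta^2)^{-\bar d/2}$, the task reduces to bounding $\sum_{j} w_j \exp(-\|q_\ell - y_j\|^2/(2\bar\Delta^2))$ from above by a quantity smaller than $c/(2k)$, up to the negligible noise term $k^{-C_{3.5}}(2\pi\bar\Delta^2)^{\bar d/2}$ (which, since $\bar d \leq C_{1.5}\log k$ and $\bar\Delta \leq 1$, is at most $k^{-C_{3.5}+(1/2)C_{1.5}\log(2\pi)}$, made smaller than $c/(4k)$ by choosing $C_{3.5}$ large enough relative to $C_{1.5}$).

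The main tool is the same packing argument used in Lemma~\ref{lem5}. Because the centers are pairwise $\Delta\sqrt{\bar d}$-separated, at most one center lies at distance less than $\Delta\sqrt{\bar d}/2$ from $q_\ell$; call it $y_{j_0}$ if it exists. By hypothesis $\|q_\ell-y_{j_0}\|\geq \rho$, so its contribution is at most $\exp(-\rho^2/(2\bar\Delta^2)) = \exp(-\bar d C_{3.2}/50)$, using $\rho^2 = \bar d \Delta^2/(25 C_{3.2})$ and $\bar\Delta^2 = \Delta^2/C_{3.2}^2$. For the remaining centers, partition them into shells
\[
S_m := \left\{j : \tfrac{m}{2}\Delta\sqrt{\bar d} \leq \|q_\ell-y_j\| < \tfrac{m+1}{2}\Delta\sqrt{\bar d}\right\}, \qquad m \geq 1,
\]
and observe, exactly as in the proof of Lemma~\ref{lem5}, that a volume/packing argument (disjoint balls of radius $\Delta\sqrt{\bar d}/2$ about each $y_j \in S_m$ sitting inside a ball of radius $(m+2)\Delta\sqrt{\bar d}/2$ about $q_\ell$) yields $|S_m| \leq (m+2)^{\bar d}$.

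Combining, I get the estimate
\[
(2\pi\bar\Delta^2)^{\bar d/2}(\gamma_{\bar\Delta}\star\nu)(q_\ell) \;\leq\; e^{-\bar d C_{3.2}/50} + \sum_{m\geq 1}(m+2)^{\bar d}\,e^{-m^{2}\bar d C_{3.2}^{2}/8}.
\]
Choosing $C_{3.2}$ with $C_{3.2}^2 \geq 16\ln 3$ makes $(m+2)^{\bar d} \leq e^{\bar d m^2 C_{3.2}^2/16}$, so each shell summand is bounded by $e^{-\bar d m^2 C_{3.2}^2/16}$, and the whole tail is at most $2e^{-\bar d C_{3.2}^2/16}$, which is dominated by $e^{-\bar d C_{3.2}/50}$ for large $C_{3.2}$.

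The final step is to enforce $e^{-\bar d C_{3.2}/50} < c/(8k)$, which combined with the noise bound gives the required strict inequality. The main obstacle is that this condition requires $C_{3.2} \gtrsim \ln(k)/\bar d$, which needs a universal constant only in the regime $\bar d = \Omega(\log k)$; in the complementary regime $\bar d = o(\log k)$ one relies on the fact that, after the Johnson--Lindenstrauss projection in Algorithm~$LearnMixture$, the relevant working dimension is always at least $\Omega(\log k)$ so that the exponent $\bar d C_{3.2}/50$ outweighs $\ln k$ for the fixed choice of $C_{3.2}$. Everything else is a routine comparison of the two exponential tails against the threshold $c/(2k)$.
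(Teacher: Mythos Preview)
Your overall strategy---prove the contrapositive, isolate the (at most one) nearest center, and control the remaining centers via the shell packing bound $|S_m|\le (m+2)^{\bar d}$---matches the paper's proof almost exactly, and the arithmetic up through the estimate
\[
(2\pi\bar\Delta^2)^{\bar d/2}(\gamma_{\bar\Delta}\star\nu)(q_\ell)\;\le\; e^{-\bar d C_{3.2}/50}+\sum_{m\ge 1}(m+2)^{\bar d}e^{-m^2\bar d C_{3.2}^2/8}
\]
is fine.

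The gap is in the closing step. You need $e^{-\bar d C_{3.2}/50}<c/(8k)$, and you justify this by asserting that ``after the Johnson--Lindenstrauss projection in Algorithm $LearnMixture$, the relevant working dimension is always at least $\Omega(\log k)$.'' That assertion is false in this paper's setup: the working dimension is $\bar d=\min(d,O(\log k))$, so when the ambient $d$ is small (say $d=O(1)$) one has $\bar d=d$, not $\Omega(\log k)$, and your inequality then forces $C_{3.2}\gtrsim (\log k)/\bar d\to\infty$, which is not permitted for a universal constant.

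The paper closes this gap differently: it invokes the standing upper bound on the mixture weights (effectively $w_j\le c^{-1}/k$), which pulls an extra factor $C/k$ in front of the entire sum. The target inequality then becomes
\[
\frac{C}{k}\Bigl(e^{-\bar d C_{3.2}/50}+\sum_{m\ge1}e^{-\bar d C_{3.2}^2 m^2/16}\Bigr)\;<\;\frac{c}{2k},
\]
which, after cancelling the $1/k$, only requires $e^{-C_{3.2}/50}+2e^{-C_{3.2}^2/16}$ to be below a fixed constant---achievable for $C_{3.2}$ a sufficiently large universal constant, uniformly in $\bar d\ge 1$ and $k$. If you replace your trivial bound $w_{j_0}\le 1$ (and likewise $\sum_{j\in S_m}w_j\le |S_m|$) by $w_j\le c^{-1}/k$, your argument goes through without any appeal to $\bar d=\Omega(\log k)$.
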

\begin{proof}
The arguments are similar to that in \cref{lem5} above. 
Suppose that \begin{displaymath}q_\ell \notin \bigcup_{j \in [k_0]} B\left(y_j, \frac{\sqrt{\bar d}
(\Delta\bar\Delta)^{\frac 12}}{5}\right)\,,\end{displaymath} 
and write $a_j:=y_j-q_\ell$. For $m\geq 1$, let $p_m:=|S_m|$ where 
\begin{displaymath}S_m:=\begin{pmatrix}r\in [k_0]:~\left(\frac m2+\frac 1{5\sqrt{C_{3.2}}}\right)\Delta\sqrt
{\bar d}\leq ||a_r||<\left(\frac {m+1}2+\frac 1{5\sqrt{C_{3.2}}}\right)\Delta
\sqrt{\bar d}\end{pmatrix}\,.\end{displaymath} One has \begin{eqnarray*}(\gamma_{\bar \Delta} \star \nu)(q_\ell)
&=&\gamma_{\bar\Delta}(0)\sum_{j\in[k_0]}w_j
e^{-\frac{||a_j||^2}{2{\bar\Delta}^2}}\\ &=&\gamma_{\bar\Delta}(0)\left(\displaystyle
\sum_{{j\in[k_0],~
0\leq ||\frac{a_j}{\Delta\sqrt{\bar d}}||-\frac 1{5\sqrt{C_{3.2}}}< \frac 12}}w_re^{-\frac{
||a_j||^2}{2\bar\Delta^2}}
+\sum_{m=1}^\infty \displaystyle
\sum_{j\in S_m}w_re^{-\frac{
||a_j||^2}{2\bar\Delta^2}}\right)\,.\end{eqnarray*} Since $||a_r-a_s||\geq \Delta{\bar d}$, we can put disjoint 
balls of radius $0.5\Delta\sqrt{\bar d}$ around each $a_r$. Thus, \begin{eqnarray*}
p_m&\leq& 2^{\bar d}\left(\left(\frac{m+2}2+\frac 1{5\sqrt{C_{3.2}}}\right)^{\bar d}
-\left(\frac{m-1}2+\frac 1{5\sqrt{C_{3.2}}}\right)^{\bar d}\right)\\ &<& 
\left(m+2\right)^{\bar d}\,,\end{eqnarray*} which gives \begin{eqnarray*}(\gamma_{\bar \Delta} \star \nu)(q_\ell)
&=&\gamma_{\bar\Delta}(0)\left(\displaystyle\sum_{{j\in[k_0],~
0\leq ||\frac{a_j}{\Delta\sqrt{\bar d}}||-\frac 1{5\sqrt{C_{3.2}}}< 
\frac 12}}w_re^{-\frac{
||a_j||^2}{2\bar\Delta^2}}
+\sum_{m=1}^\infty \displaystyle\sum_{j\in S_m}w_re^{-\frac{
||a_j||^2}{2\bar\Delta^2}}\right)\\ &\leq &\frac{C\gamma_{\bar\Delta}(0)}k\left(e^{-\frac{C_{3.2}
\bar d}{50}}+\sum_{m=1}^\infty e^{-\frac{\bar dC_{3.2}m^2}{16}}\right)\\
&<&\frac{C\gamma_{\bar\Delta}(0)}k\left(e^{-\frac{C_{3.2}
\bar d}{50}}+\frac {2000}{{\bar d}^2C_{3.2}^2}\right)\,.\end{eqnarray*} Thus, 
for $C_{3.2}$ sufficiently large, the inequality \begin{displaymath}(\gamma_{\bar \Delta}
\star \nu)(q_\ell)+ k^{- C_{3.5}}<\left(\frac{w_{min}}{2}\right)
\gamma_{\bar \Delta}(0)\end{displaymath} holds, proving the lemma.
\end{proof}

\begin{lemma}\label{lem:6.1}
If there exists some $j\in [k_0]$ such that $$q_\ell \in 
B\left(y_j, \frac{\sqrt{\bar d} (\Delta\bar\Delta)^{\frac 12}}{5}\right),$$ and $|q_\ell - \ell| < diam(Q_\ell)/4,$ then with high probability, there exists some $j\in [k_0]$ such that $q_\ell \in 
B\left(y_j, c {\bar d}^{-\frac{5}{2}}\right)$.\qed
\end{lemma}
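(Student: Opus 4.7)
The plan is to combine the approximate log-concavity of $\gamma_{\bar\De}\star\nu$ on $Q_\ell$ (Lemma~\ref{lem5}) with the guarantee of the zeroth-order stochastic concave maximization subroutine $\mathfrak{A}_0$ (Fact~\ref{fact:1}), and to exploit the exact quadratic structure of $\log(\gamma_{\bar\De}\star\nu_j)$ about $y_j$.

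First, I would check that the estimate in Lemma~\ref{lem5} still holds on $Q_\ell$ under the enlarged hypothesis $q_\ell\in B(y_j,\sqrt{\bar d}(\De\bar\De)^{1/2}/5)$, up to adjusting the constant $C_{3.2}$: since $\sqrt{\bar d}(\De\bar\De)^{1/2}/5=\De\sqrt{\bar d}/(5\sqrt{C_{3.2}})\ll \De\sqrt{\bar d}/2$, every $x\in Q_\ell$ is still at distance at least $\De\sqrt{\bar d}/2$ from every $y_r$ with $r\neq j$, so the geometric-sum computation in the proof of Lemma~\ref{lem5} goes through and yields $0\le \log(\gamma_{\bar\De}\star\nu)(x)-\log(\gamma_{\bar\De}\star\nu_j)(x)\le \eta_1$ on $Q_\ell$ with $\eta_1=O(\bar d^{-15})$.

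Next, I would apply $\mathfrak{A}_0$ to $\chi:=\log(\gamma_{\bar\De}\star\nu)$ restricted to $Q_\ell$, using $\psi:=\log(\gamma_{\bar\De}\star\nu_j)$ as the nearby concave function. By Proposition~\ref{randomsample} and a union bound over the $\mathrm{poly}(\bar d,k)$ queries that $\mathfrak{A}_0$ makes, with high probability every oracle value $f_z$ approximates $\gamma_{\bar\De}\star\nu(z)$ within $k^{-C_{3.5}}$. Combined with the pointwise lower bound $\gamma_{\bar\De}\star\nu(z)\ge w_j(2\pi\bar\De^2)^{-\bar d/2}e^{-\|z-y_j\|^2/(2\bar\De^2)}\ge k^{-O(1)}$ on $Q_\ell$ (using $\bar d\le C_{1.5}\log k$ and $w_j\ge c/k$), this converts to an additive oracle for $\chi$ of accuracy $\ll \eta_1/\bar d$. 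Since $\|\chi-\psi\|_{L^\infty(Q_\ell)}\le \eta_1$ and $\psi$ is concave and $\mathrm{poly}(k)$-Lipschitz on $Q_\ell$, Fact~\ref{fact:1} gives with high probability $\chi(q_\ell)\ge \max_{x\in Q_\ell}\chi(x)-O(\eta_1)$.

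Finally, the interior condition $\|q_\ell-\ell\|<\mathrm{diam}(Q_\ell)/4$ gives $B(q_\ell,r)\subseteq Q_\ell$ for $r:=\mathrm{diam}(Q_\ell)/4$. If $\|q_\ell-y_j\|>r$, then the point $x^\star:=q_\ell+r(y_j-q_\ell)/\|y_j-q_\ell\|$ lies in $Q_\ell$ and satisfies
\[\psi(x^\star)-\psi(q_\ell)=\frac{r(2\|q_\ell-y_j\|-r)}{2\bar\De^2}\ \ge\ \frac{r^2}{2\bar\De^2}=\Omega\!\left(\frac{\bar d^2}{\log k}\right),\]
which combined with the $\eta_1$-proximity of $\chi$ and $\psi$ contradicts the $O(\eta_1)$-optimality of $q_\ell$ for all $\bar d$ above a universal constant. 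Hence $y_j\in B(q_\ell,r)\subseteq Q_\ell$, and taking $x=y_j$ in the optimality inequality together with $\chi=\psi+O(\eta_1)$ yields $\|q_\ell-y_j\|^2/(2\bar\De^2)=\psi(y_j)-\psi(q_\ell)\le O(\eta_1)=O(\bar d^{-15})$, whence $\|q_\ell-y_j\|\le O(\bar\De\bar d^{-15/2})\le c\bar d^{-5/2}$ for any fixed constant $c$ once $\bar d$ is above a universal constant. The main difficulty is the bookkeeping of error scales: one must choose the exponent $C_{4.6}$ so that $k^{-C_{4.6}}$ beats both $\eta_1$ and the polynomial blowup from converting additive errors to logarithmic ones, verify that $\psi$ is $\mathrm{poly}(k)$-Lipschitz on $Q_\ell$ so that Fact~\ref{fact:1} applies, and handle the low-$\bar d$ edge case by adjusting universal constants.
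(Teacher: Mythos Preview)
Your argument is correct and follows the same strategy as the paper's proof: use Lemma~\ref{lem5} to replace $\log(\gamma_{\bar\De}\star\nu)$ by the exact quadratic $\log(\gamma_{\bar\De}\star\nu_j)(x)=a-\|x-y_j\|^2/(2\bar\De^2)$ up to an additive $O(\bar d^{-15})$, and then convert the near-optimality of $q_\ell$ furnished by Fact~\ref{fact:1} and Proposition~\ref{randomsample} into the distance bound $\|q_\ell-y_j\|\le c\bar\De\,\bar d^{-5/2}$.

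The one place where you go beyond the paper is worth noting. The paper's proof asserts that $|\log(\gamma_{\bar\De}\star\nu)(q_\ell)-\sup_x\log(\gamma_{\bar\De}\star\nu)(x)|$ is small ``by Proposition~\ref{randomsample} and Fact~\ref{fact:1}'', but Fact~\ref{fact:1} only controls the gap to $\max_{x\in Q_\ell}$, not to the unconstrained supremum (which is attained at $y_j$). You close this gap explicitly: the interior condition $\|q_\ell-\ell\|<\mathrm{diam}(Q_\ell)/4$ guarantees $B(q_\ell,r)\subset Q_\ell$ with $r=\mathrm{diam}(Q_\ell)/4$, and your quadratic comparison with $x^\star=q_\ell+r(y_j-q_\ell)/\|q_\ell-y_j\|$ forces $y_j\in Q_\ell$, after which the two maxima agree. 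This is precisely the role the hypothesis $|q_\ell-\ell|<\mathrm{diam}(Q_\ell)/4$ is meant to play, and the paper leaves it implicit. One minor point: by Fact~\ref{fact:1} the optimality gap for $\chi$ is $O(\bar d\,\eta_1)$ rather than $O(\eta_1)$ (since the hypothesis there is $\|\chi-\psi\|_\infty\le \epsilon/\bar d$), but with $\eta_1=O(\bar d^{-15})$ this is harmless for the conclusion.
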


\begin{proof}
If $x \in  B\left(y_j, \frac{\sqrt{\bar d} (\Delta\bar\Delta)^{\frac 12}}{5}\right)$, by \cref{lem5}, 
\begin{displaymath}
0\leq \ln (\gamma_{\bar \Delta} \star \nu)
(x)-\ln(\gamma_{\bar \Delta} \star \nu_j)(x)
\leq {\frac {1}{K{\bar d}^{5}}},\end{displaymath} where $K$ is an absolute constant that can be made arbitrarily large.
We note that $\ln(\gamma_{\bar \Delta} \star \nu_j)(x) = a - \left(\frac{1}{2{\bar \Delta}^2}\right)\|x - y_j\|^2,$ for some constant $a$. This implies that if \begin{displaymath}|\ln (\gamma_{\bar \Delta} \star \nu)(q_\ell) - \sup_x \ln (\gamma_{\bar \Delta} \star \nu)(x)| < \frac {2}{K{\bar d}^{5}},\end{displaymath} then $q_\ell \in B\left(y_j, c {\bar \Delta} {\bar d}^{-\frac{5}{2}}\right)$. However, $| \ln (\gamma_{\bar \Delta} \star \nu)(q_\ell) - \sup_x \ln (\gamma_{\bar \Delta} \star \nu)(x)|$ is indeed less than $\frac {2}{K{\bar d}^{5}}$ by \cref{randomsample} and \cref{fact:1}. Noting that $\bar \Delta < c$, this completes the proof of this lemma.
\end{proof}

The following proposition shows that every spike extracted out of $\gamma_{\bar\Delta}\star\nu$ by \cref{alg:findspikes}, is within $\epsilon/k$ of some  $y_i$.

\begin{proposition} With probability at least $1 - \exp(-k/c)$, the following is true:
The Hausdorff distance between $\{y_1, \dots, y_{k_0}\}$ and $\{\ell_{m_1}, \dots, \ell_{m_{k_2}}\}$ (which corresponds to the output of \cref{alg:boost} as mentioned in \cref{alg:findspikes} above) is less than $\epsilon/k$.\qed
\end{proposition}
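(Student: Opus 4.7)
The plan is to show that each single execution of step 1 of $FindSpikes$ produces, with constant probability, a set $M$ of size exactly $k_0$ whose associated optimizers $\{q_{\ell_{m_1}},\dots,q_{\ell_{m_{k_0}}}\}$ lie within Hausdorff distance $c\bar{d}^{-5/2}$ of $\{y_1,\dots,y_{k_0}\}$; then algorithm $Boost$ amplifies this to probability $1-\exp(-k/c)$, and the Regev--Vijayaraghavan iterative procedure converts a $c\bar{d}^{-5/2}$-approximate seed into a $\delta/k$-approximate answer.

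First I would invoke Lemma~\ref{lem:4.2} to guarantee that every true center $y_j$ lies in $\bigcup_i B_2(x_i, 2\sqrt{\bar d})$, so some grid point $\ell^{(j)}$ from step (a) satisfies $y_j \in Q_{\ell^{(j)}}$ with $y_j$ near its center. For this cube, Lemma~\ref{lem:5} (applied to the output $q_{\ell^{(j)}}$ of $\mathfrak{A}_0$ from Fact~\ref{fact:1}) ensures that $q_{\ell^{(j)}}$ clears the $(w_{min}/2)\gamma_{\bar\Delta}(0)$ threshold in step (c), and $\|\ell^{(j)} - q_{\ell^{(j)}}\| < \mathrm{diam}(Q_{\ell^{(j)}})/4$ holds because the maximizer is close to $y_j$, which is itself close to $\ell^{(j)}$. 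Conversely, Lemma~\ref{lem:8} says that any $q_\ell$ clearing the threshold lies within $\sqrt{\bar d}(\Delta\bar\Delta)^{1/2}/5$ of some $y_j$, and Lemma~\ref{lem:6.1} tightens this to within $c\bar{d}^{-5/2}$. So every true center has at least one representative in $L$, and every element of $L$ is within $c\bar{d}^{-5/2}$ of some true center.

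The de-duplication in step (d) uses the threshold $\sqrt{\bar d}(\Delta\bar\Delta)^{1/2}/2$, which is far larger than $c\bar{d}^{-5/2}$ but far smaller than $\Delta\sqrt{\bar d}/2$ (half the minimum inter-center separation): two candidates both within $c\bar{d}^{-5/2}$ of the same $y_j$ get merged, while candidates close to distinct $y_j$'s are kept separate. Hence $|M|=k_0$ and the Hausdorff distance between $\{q_{\ell_{m_j}}\}_{j\in[k_0]}$ and $\{y_1,\dots,y_{k_0}\}$ is at most $c\bar{d}^{-5/2}$. A union bound over the polynomially many cubes (Lemma~\ref{lem:4.2.8}) combined with the $k^{-C_4}$-type per-query failure probabilities from Proposition~\ref{randomsample} keeps each run's failure probability below a fixed constant. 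Algorithm $Boost$ then uses the $count_{max} = C_{4.5}k$ independent runs together with a Chernoff bound to amplify success to $1-\exp(-k/c)$. Finally I would invoke Theorem 4.1 of the full version of \cite{Regev}: a seed of Hausdorff accuracy $c\bar{d}^{-5/2}$ suffices for their iterative procedure to return, in $\mathrm{poly}(k,d,\delta^{-1})$ time, a set of centers within Hausdorff distance $\delta/k$ of the true ones, with the same $1-\exp(-k/c)$ success probability.

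The main obstacle is the tight interplay of approximation budgets required by Fact~\ref{fact:1}: on each cube $Q_\ell$ containing some $y_j$, the function being maximized must be approximately log-concave within $\epsilon/\bar d$ of a truly log-concave function (enabled by Lemma~\ref{lem5}), while the oracle error from Proposition~\ref{randomsample} must be pushed well below this tolerance. Tracking these error budgets through the union bound over the $k^{O(1)}$ cubes, the polynomially many oracle queries per call to $\mathfrak{A}_0$, and the $C_{4.5}k$ boosting iterations, while meeting the final $\exp(-k/c)$ probability target, is the most delicate part of the argument.
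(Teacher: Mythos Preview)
Your proposal is correct and follows essentially the same route as the paper: invoke Lemmas~\ref{lem:5}, \ref{lem:8}, and \ref{lem:6.1} to show that $L$ is contained in $\bigcup_j B(y_j, c\bar d^{-5/2})$ and that every $y_j$ has a representative in $L$, then conclude via $Boost$ and Theorem~4.1 of the full version of \cite{Regev}. Your write-up is in fact more careful than the paper's own proof, which does not spell out the de-duplication analysis of step~(d), the union bound over the $k^{O(1)}$ cubes from Lemma~\ref{lem:4.2.8}, or the Chernoff amplification in $Boost$; these are all implicit there but explicit in your sketch.
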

\begin{proof}
Consider the sequence $L$ in \cref{alg:findspikes}. Note that, we know from the statements in \cref{lem:5}, \cref{lem:8} and \cref{lem:6.1} that 
\begin{itemize} 
\item  If $q_\ell \in \bigcup_{j \in [k_0]} B\left(y_j, \left(\frac{\bar\Delta \bar d}{200}  \sqrt{\frac{1}{C_{1.5} \log k} }\right)\right)$, then \begin{displaymath}(\nu \star \gamma_{\bar \Delta})(q_\ell) - k^{- C_{3.5}} \geq  \left(\frac{w_{min}}{2}\right)\gamma_{\bar \Delta}(0)\,.\end{displaymath}
\item 
If $(\gamma_{\bar \Delta} \star \nu)(q_\ell) + k^{- C_{3.5}} \geq (\frac{w_{min}}{2})
\gamma_{\bar \Delta}(0)$, then there exists some $j\in [k_0]$ such that $$q_\ell \in 
B\left(y_j, \frac{\sqrt{\bar d} (\Delta\bar\Delta)^{\frac 12}}{5}\right)\,.$$
\item If there exists some $j\in [k_0]$ such that $$q_\ell \in 
B\left(y_j, \frac{\sqrt{\bar d} (\Delta\bar\Delta)^{\frac 12}}{5}\right)\,,$$ and (as is true from 
\cref{alg:findspikes}), $|q_\ell - \ell| < diam(Q_\ell)/4,$ then with probability at least $1 - \exp(\frac{-k}{c})$, there exists some $j\in [k_0]$ such that $q_\ell \in 
B\left(y_j, c {\bar d}^{-\frac{5}{2}}\right)$.\\
\end{itemize}
We have shown that $L$ is, with high probability, contained in $$\bigcup_{j \in [k_0]} B(y_j,  c{\bar d}^{-5/2})\,.$$ 
Moreover, we observe that for each $j\in [k_0]$, there must exist a $q_\ell$ such that $q_\ell \in B\left(y_j,  c{\bar d}^{-\frac{5}{2}}\right)$, by the exhaustive choice of starting points.
This proposition now follows from Theorem 4.1 of \cite{MR3734220}.
\end{proof}

\section*{Acknowledgement} We acknowledge the support of DAE project 12-R\&D-TFR-5.01-0500. 
We thank International Centre for Theoretical Sciences (ICTS) for their hospitality during a visit for the program - Statistical Physics of Machine Learning (ICTS/SPMML2020/01).

\end{document}